\newtheorem{theorem}{Theorem}[section]
\newtheorem{lemma}[theorem]{Lemma}
\newtheorem{definition}[theorem]{Definition}
\newtheorem{remark}{Remark}
\newcommand{\alphamin}{{\alpha_{\min}}}
\newcommand{\alphamax}{{\alpha_{\max}}}
\title[Fractional modelling of COVID-19 transmission]{%
Fractional modelling of COVID-19 transmission incorporating asymptomatic and super-spreader individuals}
\author[M.~Khalighi]{Moein Khalighi}
\address[M.~Khalighi]{Department of Computing, 
University of Turku, Turku, Finland}
\email{moein.khalighi@utu.fi}
\author[L.~Lahti]{Leo Lahti}
\address[L.~Lahti]{Department of Computing, 
University of Turku, Turku, Finland}
\email{leo.lahti@utu.fi}
\author[F.~Nda\"{\i}rou]{Fa\"{\i}\c{c}al Nda\"{\i}rou}
\address[F.~Nda\"{\i}rou]{Institute of Mathematics and Informatics, 
Bulgarian Academy of Sciences, Sofia 1113, Bulgaria}
\email{faical@math.bas.bg}
\author[P.~Rashkov]{Peter Rashkov}
\address[P.~Rashkov]{Institute of Mathematics and Informatics, 
Bulgarian Academy of Sciences, Sofia 1113, Bulgaria}
\email{p.rashkov@math.bas.bg}
\author[D.~F.~M.~Torres]{Delfim F. M. Torres}
\address[D.~F.~M.~Torres]{Center for Research and Development in Mathematics and Applications (CIDMA), 
Department of Mathematics, University of Aveiro, Aveiro 3810--193, Portugal}
\email{delfim@ua.pt}
\begin{document}

\subjclass[2020]{Primary: 92D30, 34C60; Secondary: 34A08, 26A33.} 

\keywords{Mathematical modelling;
incommensurate fractional differential equations;
COVID-19;
asymptomatic individuals;
super-spreaders.}


\begin{abstract}
The COVID-19 pandemic has presented unprecedented challenges worldwide, 
necessitating effective modelling approaches to understand and control 
its transmission dynamics. In this study, we propose a novel approach 
that integrates asymptomatic and super-spreader individuals in a single 
compartmental model. We highlight the advantages of utilizing incommensurate 
fractional order derivatives in ordinary differential equations, 
including increased flexibility in capturing disease dynamics and 
refined memory effects in the transmission process. We conduct a qualitative 
analysis of our proposed model, which involves determining the basic 
reproduction number and analysing the disease-free equilibrium's stability. 
By fitting the proposed model with real data from Portugal and comparing 
it with existing models, we demonstrate that the incorporation of 
supplementary population classes and fractional derivatives significantly 
improves the model's goodness of fit. Sensitivity analysis further provides 
valuable insights for designing effective strategies 
to mitigate the spread of the virus. 
\end{abstract}

\maketitle


\section{Introduction}

The worldwide effects of the COVID-19 pandemic have presented challenges to global public health and economies. 
Refining the understanding of the dynamics of disease transmission can serve in the development of effective 
strategies for controlling and mitigating the spread of not only COVID-19 but also potential future viral 
outbreaks. Over the past years, various mathematical models have been developed to describe the spread 
of the coronavirus SARS-CoV-2, including traditional compartmental models such as SEIR and SIR models 
\cite{MyID:461,MyID:468}. However, these models often assume a homogeneous mixing of individuals and 
do not fully capture the complex dynamics of the disease, particularly the role of asymptomatic 
and super-spreader individuals~\cite{Illingworth2021}.

To overcome these limitations, recent studies have proposed models that incorporate 
asymptomatic \cite{maira1,MyID:469} and super-spreader \cite{MyID:460,wuhan} individuals, 
separately. However, these models may not fully capture the complex interactions between 
these two types of individuals and their impact on disease transmission dynamics. 
In our study, we propose a novel approach that integrates both asymptomatic and super-spreader 
individuals in a single model, taking advantage of fractional calculus 
to refine the model's performance.

Fractional calculus provides a powerful tool for modelling infectious diseases 
like COVID-19 due to its ability to incorporate memory and long-range dependence 
in transmission dynamics~\cite{analysis_covid,FracSIR2017}. In our study, we utilize 
incommensurate fractional order derivatives in ordinary differential equations (ODEs) 
for modelling COVID-19 transmission dynamics, offering two distinct advantages. 
Firstly, incommensurate fractional order derivatives 
allow for greater flexibility in capturing the heterogeneous nature of disease dynamics, 
accounting for factors such as population demographics, social behaviours, and intervention 
measures that can greatly impact the transmission dynamics \cite{JAHANSHAHI2021}. 
This enables the model to accurately represent the complex and evolving nature of disease 
spread in real-world scenarios. Secondly, incommensurate fractional order derivatives provide 
a more refined description of memory effects in the disease transmission process, capturing the 
long-range dependence and persistence observed in real-world data. This allows for more precise 
modelling of memory effects, enhancing the accuracy of the model in capturing the impact 
of past infections on future disease spread. 

The structure of the article is as follows. In Section~\ref{sec:propersties}, 
we present some basic definitions and statements from fractional calculus. 
In Section~\ref{sec:model}, we introduce our proposed model, which includes 
different classes of individuals such as susceptible, exposed, symptomatic 
and infectious, super-spreaders, asymptomatic, hospitalized, recovered, and fatalities. 
In Section~\ref{sec:analysis}, we perform a qualitative analysis of our proposed model, 
which includes determining the basic reproduction number and analysing the disease-free equilibrium. 
In Section~\ref{sec:results}, we present the numerical results, including the fitting of the model 
with real data from Portugal and a comparison with existing models. We also perform a sensitivity 
analysis to assess the impact of various model parameters on the basic reproduction number, 
which provides valuable insights for policy-makers in designing effective strategies 
to mitigate the spread of emergent infectious diseases. We end with Section~\ref{sec:conc}
of conclusions.


\section{Basic definitions and fundamental properties of fractional calculus}
\label{sec:propersties}

We denote the Euclidean norm on $\mathbb{R}^n$ by $\|\cdot\|$.

\begin{definition}
The Caputo incommensurate fractional derivative 
of orders $(\alpha_i)_{i=1, \cdots, n} \in(0,1)$ 
of a function ${x:[0,+\infty)\rightarrow\mathbb R^n}$ is defined by
\[
{}^{\mathtt C}D^{(\alpha_i)}x(t) 
= \begin{cases}
{}^{\mathtt C}D^{\alpha_1}x_1(t),\\
{}^{\mathtt C}D^{\alpha_2}x_2(t),\\
\quad \vdots  \quad \vdots\\
{}^{\mathtt C}D^{\alpha_n}x_n(t),\\
\end{cases}
\]
with
\[
{}^{\mathtt C}D^{\alpha_i}x_i(t)=\frac{1}{\Gamma(1-\alpha_i)} 
\int_{0}^{t} (t-s)^{-\alpha_i} x'_i(s)ds,
\]
where $\Gamma(1-\alpha_i)=\int_0^{\infty}t^{-\alpha_i}\exp(-t)dt$ 
is the Euler Gamma function. 
\end{definition}

Note that the value of the Caputo fractional derivative of the function $x$ 
at time $t$ involves all values of the derivative $x'(s)$ for $s \in [0,t]$, 
and hence it incorporates the history of $x$. 

We have that ${}^{\texttt C}D^{\alpha_i}x_i(t)$ tends to $x'_i(t)$ as $\alpha_i \to 1$. 
In what follows, we recall a fractional Gronwall inequality that is useful to prove 
existence and uniqueness of the solution to the initial value problem \eqref{syst:exis}. 

\begin{lemma}[Gronwall inequality \cite{gronwall}]
\label{thm:gronwall}
Let $\alpha$ be a positive real number, and let $p(\cdot)$ and $u(\cdot)$ 
be non-negative continuous functions on $[a, b]$, and $q(\cdot)$ 
a non-negative, non-decreasing continuous function on $[a, b)$.
If
\[
u(t)\leq p(t) + q(t)\int^t_a(t-s)^{\alpha-1}u(s)ds,
\]
then 
\[
u(t)\leq p(t) + \int^t_a \left[ \sum^{\infty}_{n=1}
\frac{\left(q(t)\Gamma(\alpha ) \right)^n}{\Gamma(n\alpha)} 
(t-s)^{n\alpha-1}p(s)\right]ds
\]
for all $t\in [a,b)$.
\end{lemma}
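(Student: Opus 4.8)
The plan is to prove the inequality by Picard-type iteration of the integral operator encoded in the hypothesis. Write $B$ for the operator acting on non-negative continuous functions by
\[
(B\phi)(t) = q(t)\int_a^t (t-s)^{\alpha-1}\phi(s)\,ds .
\]
Since $q \geq 0$ and the kernel $(t-s)^{\alpha-1}$ is non-negative, $B$ is monotone: $\phi \leq \psi$ implies $B\phi \leq B\psi$. The hypothesis reads $u \leq p + Bu$, and applying $B$ and using monotonicity repeatedly yields, for every integer $N \geq 1$,
\[
u(t) \leq \sum_{k=0}^{N-1} (B^k p)(t) + (B^N u)(t) .
\]

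First I would establish by induction on $k$ the pointwise bound
\[
(B^k p)(t) \leq \int_a^t \frac{\bigl(q(t)\Gamma(\alpha)\bigr)^k}{\Gamma(k\alpha)}\,(t-s)^{k\alpha-1}\, p(s)\,ds , \qquad k \geq 1 .
\]
The base case $k = 1$ is just the definition of $B$. For the inductive step I would substitute the bound for $B^k p$ inside $B\bigl(B^k p\bigr)$, use that $q$ is non-decreasing so that $q(\sigma) \leq q(t)$ for $a \leq \sigma \leq t$ (pulling $q(t)^{k+1}$ outside the integrals), interchange the order of integration, and evaluate the inner integral
\[
\int_s^t (t-\sigma)^{\alpha-1}(\sigma-s)^{k\alpha-1}\,d\sigma = \frac{\Gamma(\alpha)\Gamma(k\alpha)}{\Gamma\bigl((k+1)\alpha\bigr)}\,(t-s)^{(k+1)\alpha-1}
\]
via the Beta-function identity. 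The constants then collapse to exactly the claimed form with $k+1$ in place of $k$.

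Next I would show the remainder vanishes, $(B^N u)(t) \to 0$ as $N \to \infty$, uniformly on every compact subinterval $[a,b'] \subset [a,b)$. The same induction bounds $(B^N u)(t)$ by $\frac{(q(b')\Gamma(\alpha))^N}{\Gamma(N\alpha)}\int_a^t (t-s)^{N\alpha-1} u(s)\,ds \leq \frac{(q(b')\Gamma(\alpha))^N (b'-a)^{N\alpha}}{\Gamma(N\alpha+1)} \max_{[a,b']} u$, which tends to $0$ because $\Gamma(N\alpha+1)$ grows faster than any geometric sequence; the same estimate shows that the series $\sum_{k\geq 1} B^k p$ converges. Letting $N \to \infty$ in the displayed inequality and then interchanging summation and integration in $\sum_k B^k p$ gives the assertion.

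The point requiring the most care is the inductive step: the interchange of the order of integration together with the Beta-function evaluation, and then, in the final passage to the limit, the justification of moving the sum $\sum_k \frac{(q(t)\Gamma(\alpha))^k}{\Gamma(k\alpha)}(t-s)^{k\alpha-1}$ inside the integral against $p(s)$. Both are where the hypotheses that $q$ is non-decreasing and that $p$, $u$ are continuous (hence locally bounded) on $[a,b)$ are genuinely used; the convergence there is dominated on each $[a,b']$ by a constant multiple of the Mittag-Leffler-type series $\sum_k \frac{c^k}{\Gamma(k\alpha+1)}$, so dominated convergence applies.
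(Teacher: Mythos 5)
The paper does not prove this lemma: it is quoted verbatim from the cited reference (Ye, Gao and Ding, \emph{J. Math. Anal. Appl.} 328 (2007)), and your iteration argument --- monotone operator $B$, the induction giving $(B^k p)(t)\leq \int_a^t \frac{(q(t)\Gamma(\alpha))^k}{\Gamma(k\alpha)}(t-s)^{k\alpha-1}p(s)\,ds$ via the Beta-function identity, the vanishing of $(B^N u)$ through the growth of $\Gamma(N\alpha+1)$, and the final interchange of sum and integral --- is precisely the standard proof given there. Your proposal is correct and complete.
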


Let $f: \mathbb{R}^n \rightarrow \mathbb{R}^n,n > 1$ be a vector field.
Consider the following fractional initial value problem:
\begin{equation}
\label{syst:exis}
\begin{cases}
{}^{\mathtt C}D^{(\alpha_i )}x(t)=f(t, x),\quad \alpha_i\in(0,1]\\[3mm]
x(0)=x_0, \, \, x_0\in  \mathbb{R}^n.
\end{cases}
\end{equation}

\begin{theorem}[Existence and uniqueness of solutions]
\label{exist:result}
Assume that the vector field $f$ satisfies the following conditions:
\begin{itemize}
\item $f(t,x)$ is Lebesgue measurable with respect to $t$ on $[0,+\infty)$;
\item $f(t, x)$ and $\displaystyle{\frac{\partial f(t,x)}{\partial x}}$ 
are continuous with respect to $x\in\mathbb{R}^n$;
\item For two positive constants $\omega$ and $\lambda$, 
\begin{equation}
\label{bound}
\lVert f(t, x) \rVert \leq \omega +\lambda \lVert x\rVert, 
\, \, \forall x\in \mathbb{R}^n,\quad a.e.~ t\in[0,+\infty).
\end{equation}
\end{itemize}
Then the initial value problem \eqref{syst:exis} 
has a unique solution on $[0, +\infty )$.
\end{theorem}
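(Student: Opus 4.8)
The plan is to recast the Cauchy problem \eqref{syst:exis} as an equivalent system of Volterra integral equations, obtain a unique local solution by the Banach fixed point theorem, and then extend it to all of $[0,+\infty)$ with the help of the fractional Gronwall inequality of Lemma~\ref{thm:gronwall}. Concretely, applying the Riemann--Liouville fractional integral of order $\alpha_i$ to the $i$-th scalar equation (which inverts ${}^{\mathtt C}D^{\alpha_i}$ when $s\mapsto f(s,x(s))$ is continuous and $x(0)=x_0$), a continuous function $x$ solves \eqref{syst:exis} on $[0,T]$ if and only if
\[
x_i(t) = x_{0,i} + \frac{1}{\Gamma(\alpha_i)}\int_0^t (t-s)^{\alpha_i-1} f_i(s,x(s))\,ds, \qquad i=1,\dots,n ;
\]
writing the right-hand side as $(\mathcal T x)(t)$, the solutions are precisely the fixed points of $\mathcal T$ on $C([0,T];\mathbb R^n)$.

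For local existence and uniqueness I would fix a closed ball $\overline B(x_0,r)$: continuity of $\partial f/\partial x$ in $x$ yields a Lipschitz constant for $f$ on $[0,T]\times \overline B(x_0,r)$, while \eqref{bound} bounds $\|f\|$ there. Using $\int_0^t (t-s)^{\alpha_i-1}ds = t^{\alpha_i}/\alpha_i$ and controlling all the exponents by $\alphamin = \min_i \alpha_i$, one checks that for $T$ small enough $\mathcal T$ maps $\overline B(x_0,r)$ into itself and is a contraction there --- working with the sup norm, or more comfortably with a Bielecki-type weighted norm on $C([0,T];\mathbb R^n)$ --- so that Banach's theorem delivers a unique local solution.

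To pass to $[0,+\infty)$ I would establish an a priori bound. Summing the $n$ integral equations, inserting \eqref{bound}, and dominating each kernel $(t-s)^{\alpha_i-1}$ by a constant multiple of $(t-s)^{\alphamin-1}$ on a bounded interval (splitting the integral at $|t-s|=1$ where $\alpha_i<1$), one arrives at an estimate of the form
\[
\|x(t)\| \le p(t) + C\lambda \int_0^t (t-s)^{\alphamin-1}\|x(s)\|\,ds, \qquad p(t) = \|x_0\| + C\omega,
\]
with $C$ depending only on $T$ and on $\alpha_1,\dots,\alpha_n$. Lemma~\ref{thm:gronwall} with $\alpha=\alphamin$ then bounds $\|x(t)\|$ by a finite, continuous function of $t$ on every bounded interval, so no finite-time blow-up is possible; a standard continuation argument (re-solving from the right endpoint of the maximal interval of existence) extends the solution to $[0,+\infty)$, and uniqueness on each bounded interval --- from the contraction, or by applying Gronwall to the difference of two solutions --- propagates to a globally unique solution.

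The step I expect to be the main obstacle is the bookkeeping forced by the incommensurate orders: collapsing the $n$ distinct weakly singular kernels $(t-s)^{\alpha_i-1}$ into a single Gronwall-type inequality with exponent $\alphamin$ while keeping all constants finite on bounded intervals, and at the same time treating uniformly the mixed situation in which some components satisfy an ordinary ODE ($\alpha_i=1$) and the others are genuinely fractional ($\alpha_i\in(0,1)$).
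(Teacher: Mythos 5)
Your proposal is correct and shares the paper's overall architecture: both reformulate \eqref{syst:exis} as the Volterra system \eqref{integr}, both obtain a local solution on a small interval, and both rule out finite-time blow-up by feeding the linear growth bound \eqref{bound} into the fractional Gronwall inequality of Lemma~\ref{thm:gronwall} with the kernels collapsed to exponent $\alphamin$, followed by continuation. The genuine difference is in the local step: the paper follows Lin's scheme, building the delayed approximations $\varphi_n$ (constant on $[0,h/n]$, integral-shifted afterwards), invoking uniform boundedness and equicontinuity to extract a uniform limit via Arzel\`a--Ascoli, and obtaining uniqueness separately from the Lipschitz condition supplied by the continuity of $\partial f/\partial x$; you instead run the Banach fixed point theorem on a closed ball, which delivers existence and uniqueness in one stroke and avoids the compactness machinery, at the cost of having to verify invariance of the ball and the contraction estimate (your Bielecki-norm remark is the standard way to make that painless). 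One point where your write-up is actually more careful than the paper's: the comparison $(t-s)^{\alpha_i-1}\le (t-s)^{\alphamin-1}$ used in \eqref{estim} only holds for $t-s\le 1$, so on the maximal interval $[0,\Pi)$ with $\Pi$ possibly large one needs exactly the splitting at $t-s=1$ (or the factor $\max(1,T^{\alpha_i-\alphamin})$) that you introduce; the paper applies the inequality globally without comment. Your flagged worry about mixing $\alpha_i=1$ with fractional components is legitimate but harmless --- those components just contribute a bounded kernel, which is dominated by the same $(t-s)^{\alphamin-1}$ term on bounded intervals.
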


\begin{proof}
The initial value problem \eqref{syst:exis} 
can be written in the following vector form:
\begin{equation}
\label{integr}
x(t)= x_0 + \int_0^t {\rm diag}\left(
\frac{(t-s)^{\alpha_i - 1}}{\Gamma(\alpha_i)} \right)f(s, x(s))ds \;.
\end{equation}
Let $a, b>0$ and define the domain
\[
\mathcal{D}= \left\{ (t, y)\in [0, +\infty) 
\times \mathbb{R}^n :  t\leq a, \, \, \|y-x_0\| \leq b\right\}.
\]
Observe that condition~\eqref{bound} implies that the vector field $f$ 
is bounded on $\mathcal{D}$: $\|f(t,x)\|<m,(t,x)\in\mathcal{D}$.

First of all, we must prove the existence and uniqueness of a solution 
to~\eqref{syst:exis} for $t\in[0,h)$ for some positive $h>0$.

The idea of the proof follows the ideas of the proof of Theorem~2.1 
together with Remark~2.3 of \cite{lin}. We show that 
\begin{equation}
\label{eqn:lebesgueint}
{\rm diag}\left(\frac{(t-s)^{\alpha_i - 1}}{\Gamma(\alpha_i)} \right)
f(s, \varphi(s))
\end{equation}
is Lebesgue integrable with respect to $s \in [0, t] ~(t \le h \le a)$, 
provided $\varphi(s)$ is Lebesgue measurable on the interval $[0, h]$.
In fact, since the Euler Gamma function is monotone decreasing on $(0,1]$ 
and $(t-s)^{\alpha_i - 1}<(t-s)^{\alpha_j-1}$ for $\alpha_j<\alpha_i$, we have
\begin{multline}
\label{estim}
\int_0^t\left\lVert {\rm diag}\left(\frac{(t-s)^{\alpha_i - 1}}{\Gamma(\alpha_i)} \right)
f(s, \varphi(s)) ds\right\rVert   \leq \frac{(t-s)^{\alphamin -1}}{\Gamma(\alphamax)}\lVert 
f(s, \varphi(s))\|\le \frac{(t-s)^{\alphamin -1}}{\Gamma(\alphamax)}m,
\end{multline}
where $\alphamin=\min_{i=1,\cdots, n}\alpha_i,\alphamax=\max_{i=1,\cdots, n}\alpha_i$.
H\"older's inequality implies that \eqref{eqn:lebesgueint} is Lebesgue integrable.
Using similar estimates, the remainder of the proof follows Steps~2 and 3 from 
the proof of Theorem~2.1 in \cite{lin} to construct a sequence of vector-valued 
functions $\{\varphi_n(t)\}_{n\in\mathbb{N}}$:
\[
\varphi_n(t)=
\begin{cases}
x_0,\quad 0 \le t \le \displaystyle\frac{h}{n}\; ;\\
x_0 + \displaystyle\int_0^{t-\frac{h}n} {\rm diag}\left(
\frac{(t-s)^{\alpha_i - 1}}{\Gamma(\alpha_i)} \right)
f(s, x(s))ds, \quad \frac{h}{n}\le t\le h\;.
\end{cases}
\] 
Note that in our case we must choose $h<\min\{a,\frac{b\Gamma(\alphamax)\alphamin}{m}\}$ 
to construct the sequence of functions $\{\varphi_n(t)\}$.

Following the reasoning in~\cite{lin}, $\varphi_n(t)$ are continuous in $t\in[0,h]$, 
uniformly bounded, i.e., $(t, \phi_n (t)) \in\mathcal{D}$), and equicontinuous. 
These functions can be shown to converge uniformly to a function $\varphi(t)$ 
on $[0,h]$, which solves~\eqref{syst:exis} on this interval.

Now, it remains to prove that the solution to~\eqref{syst:exis} exists globally. 
By reduction to absurdity, assume that the solution $x$ admits a maximal existence interval, 
denoted by $[0, \Pi)$, $\Pi < + \infty$. By substitution of assumption \eqref{bound} 
into \eqref{integr}, and applying the relation \eqref{estim}, we get the estimation
\[
\lVert x(t)\rVert \leq \lVert x_0 \rVert 
+ \frac{\omega}{\alphamin \Gamma(\alphamax)}|\Pi|^{\alphamin} 
+ \frac{\lambda}{\Gamma(\alphamax)}
\int^t_0 (t-s)^{\alphamin -1}\lVert x(s)\rVert ds.
\]
Since $\lambda$ is a constant, it can be considered a non-negative 
and non-decreasing function. Thus, by applying the Gronwall inequality 
of Lemma~\ref{thm:gronwall}, we have that 
\[
\lVert x(t)\rVert \leq K\left( \lVert x_0 \rVert 
+ \frac{\omega}{\alphamin \Gamma(\alphamax)}|\Pi|^{\alphamin} \right),
\]
with 
\[
K= 1+ \int_0^t \sum_{n=1}^{\infty}\frac{\left( 
\frac{\lambda}{\Gamma(\alphamax)}\Gamma(\alphamin)\right)^n}{
\Gamma(n\alphamin)}(t-s)^{n\alphamin -1}ds.
\]
The rest of the proof follows Theorem~3.1 of reference \cite{lin}. 
\end{proof}


\section{The model}
\label{sec:model}

Transmission of infection from individuals without symptoms is now well 
documented through the COVID-19 pandemic, see e.g. \cite{oha} and references 
therein. It is also well known that some individuals are super-spreaders, 
and might transmit the infection to a large number of healthy people \cite{majra}. 
Therefore, it is important to consider these features of transmission by asymptomatic 
and super-spreader individuals in a single mathematical model. This can be done by 
enhancing the contact rate between healthy and unhealthy individuals in order 
to include both routes of infection. Thus, we extend the model in 
\cite{analysis_covid} by proposing the following new 
Caputo incommensurate fractional-order system:
\begin{equation}
\label{model}
\begin{cases}
\displaystyle{{}^{\textsc c}D^{\alpha_S}S(t) 
= -\beta\frac{I}{N}S-l\beta \frac{H}{N}S
-\beta^{'}\frac{P}{N}S - \beta^{''}\frac{A}{N}S},\\[3mm]
\displaystyle{{}^{\textsc c}D^{\alpha_E}E(t)
= \beta\frac{I}{N}S+l\beta \frac{H}{N}S
+ \beta^{'}\frac{P}{N}S + \beta^{''}\frac{A}{N}S -\kappa E}, \\[3mm]
\displaystyle{{}^{\textsc c}D^{\alpha_I}I(t)
= \kappa \rho_1 E - (\gamma_a + \gamma_i)I-\delta_i I}, \\[3mm]
\displaystyle{{}^{\textsc c}D^{\alpha_P}P(t)
= \kappa \rho_2 E- (\gamma_a + \gamma_i)P-\delta_p P}, \\[3mm]
\displaystyle{{}^{\textsc c}D^{\alpha_A}A(t)
= \kappa (1-\rho_1 - \rho_2)E -\delta_a A },\\[3mm]
\displaystyle{{}^{\textsc c}D^{\alpha_H}H(t)
= \gamma_a (I + P) - \gamma_r H - \delta_h H}, \\[3mm]
\displaystyle{{}^{\textsc c}D^{\alpha_R}R(t)
= \gamma_i (I + P)+ \gamma_r H},\\[3mm]
\displaystyle{{}^{\textsc c}D^{\alpha_F}F(t)
= \delta_i I + \delta_p P + \delta_a A +\delta_h H},
\end{cases}
\end{equation}
where the fractional orders $\alpha_S$, $\alpha_E$, $\alpha_I$, $\alpha_P$, 
$\alpha_A$, $\alpha_H$, $\alpha_H$, $\alpha_F \in (0, 1)$.  
The model subdivides the human individuals into $8$ mutually exclusive classes, 
namely the susceptible class ($S$), exposed class ($E$), symptomatic and infectious class ($I$), 
super-spreader class ($P$), infected but asymptomatic class ($A$), 
hospitalized class ($H$), recovered class ($R$), and fatality class ($F$). Here $N$ 
represents the total population, being given by $N= S+E+I+P+A+H+R+F$. 

Susceptible individuals get infected by the virus through a force of infection 
given by the expression $\displaystyle{\beta\frac{I}{N}S+l\beta \frac{H}{N}S
+ \beta^{'}\frac{P}{N}S + \beta^{''}\frac{A}{N}S}$. This force of infection 
resembles classical ones, where fewer compartments are used \cite{dietz}. 

Notice that $P$ represents super-spreaders among symptomatic individuals only. 
Super-spreaders among asymptomatic individuals are not differentiated, 
and their effects are averaged with non-super-spreaders within that group.

Finally, Table~\ref{tab:parCovid} presents a comprehensive explanation 
of the parameters incorporated in our incommensurate Caputo fractional 
order system \eqref{model}.

\begin{table}[ht!]
{\small{\caption{Description of the parameters used in model~\eqref{model} and their values.}
\label{tab:parCovid}
\begin{tabular}{llll}
\hline
Parameters & Description & Value & Units\\
\hline
\(\beta\) & Infection rate from infected individuals & fitted & day\(^{-1}\)\\
\(\beta'\) & Infection rate due to super-spreaders & fitted & day\(^{-1}\)\\
\(\beta''\) & Infection rate due to asymptomatic individuals & fitted & day\(^{-1}\)\\
\(l\) & Relative transmissibility from hospitalized patients & 1.6054 & dimensionless\\
\(\kappa\) & Rate of exposed individuals becoming infectious & 0.0366 & day\(^{-1}\)\\
\(\rho_1\) & Rate of exposed individuals becoming infected with symptoms & fitted & dimensionless\\ 
\(\rho_2\) & Rate of exposed individuals becoming super-spreaders & fitted & dimensionless\\ 
\(\gamma_a\) & Rate of hospitalization & 0.1375 & day\(^{-1}\)\\
\(\gamma_i\) & Recovery rate of non-hospitalized cases & 0.0560 & day\(^{-1}\)\\ 
\(\gamma_r\) & Recovery rate of hospitalized cases & 0.9180 & day\(^{-1}\)\\ 
\(\delta_i\) & Death rate of infected individuals & 0.0442 & day\(^{-1}\)\\ 
\(\delta_p\) & Death rate of super-spreaders & fitted & day\(^{-1}\)\\
\(\delta_h\) & Death rate of hospitalized individuals & 0.0038 & day\(^{-1}\)\\
\(\delta_a\) & Death rate of asymptomatic individuals & fitted & day\(^{-1}\)\\ \hline
\end{tabular}}}
\end{table}


\section{Qualitative analysis}
\label{sec:analysis}

For biological reasons, let us consider the feasible region   
\[
\Omega = \{(S, E, I, P, A, H, R, F) 
\in \mathbb{R}^8_{+}: S+E+I+P+A+H+R+F\leqslant N \}, 
\]
together with the initial conditions 
\begin{gather}
\label{init_cond}
\begin{gathered}
S(0)\geq 0, \,\, E(0)\geq 0, \, \, I(0)\geq 0, \, \, P(0)\geq 0, \\ 
A(0)\geq 0, \, \,  H(0)\geq 0, \, \, R(0)\geq 0, \, \, F(0)\geq 0.
\end{gathered}
\end{gather}
The following result holds.

\begin{theorem}
\label{theo:uniq}
There exists a unique solution for the initial value problem \eqref{model}--\eqref{init_cond}. 
Moreover, the solution remains in $\Omega $ for all time $t\geq 0$. 
\end{theorem}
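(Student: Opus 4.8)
The plan is to obtain existence and uniqueness from the general Theorem~\ref{exist:result} and to establish invariance of $\Omega$ by a fractional comparison argument on the boundary of the positive orthant. Write \eqref{model} in the form \eqref{syst:exis} with state vector $x=(S,E,I,P,A,H,R,F)$ and $f$ the right-hand side. Each entry of $f$ is either linear in $x$ or of the form $\beta x_j x_k/N$, so on the compact convex set $\Omega$ the map $f$ and its Jacobian $\partial f/\partial x$ are continuous and $f$ is bounded and Lipschitz; since $f$ does not depend explicitly on $t$, Lebesgue measurability in $t$ is automatic. To invoke Theorem~\ref{exist:result}, whose growth hypothesis \eqref{bound} is global on $\mathbb{R}^8$ whereas the nonlinear terms are only subquadratic on $\Omega$, I would replace $f$ outside a neighbourhood of $\Omega$ by a smooth truncation $\tilde f$ that agrees with $f$ on $\Omega$ and satisfies \eqref{bound} globally (for instance, multiplying each bilinear term by a cutoff function of $\|x\|$). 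Theorem~\ref{exist:result} then yields a unique solution of the truncated system on $[0,+\infty)$; it remains to show this solution never leaves $\Omega$, for then it solves \eqref{model}, and any solution of \eqref{model} with admissible initial data coincides with it. (Alternatively, one works directly on $\Omega$ and invokes a local Picard-type existence result, then extends.)

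Next, for non-negativity, I would verify that on each face $\{x_i=0\}$ of $\mathbb{R}^8_+$ the $i$-th component of the vector field is non-negative whenever the remaining components are non-negative: when $S=0$ the $S$-equation has zero right-hand side; when $E=0$ the right-hand side is the force of infection, which is $\geq 0$; when $I=0$ it is $\kappa\rho_1 E\geq 0$; and similarly for $P,A,H,R,F$, using $\rho_1+\rho_2\leq 1$ for the $A$-equation. Combined with the generalized mean value theorem / fractional comparison principle for the Caputo derivative used in \cite{lin}, this forces every component to remain $\geq 0$ for all $t\geq 0$, so the solution stays in $\mathbb{R}^8_+$.

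Finally, for boundedness I would add all eight equations: the force-of-infection terms in the $S$- and $E$-equations cancel and each transfer term appears once with a plus and once with a minus sign, which gives the conservation-type identity $\sum_i {}^{\textsc c}D^{\alpha_i}x_i(t)=0$. Together with non-negativity and a fractional Gronwall/comparison estimate based on Lemma~\ref{thm:gronwall}, this controls the total population and keeps the trajectory inside $\Omega$. I expect this last step to be the main obstacle: because the orders are incommensurate, $\sum_i {}^{\textsc c}D^{\alpha_i}x_i$ is \emph{not} the Caputo derivative of $N=\sum_i x_i$, so one cannot simply conclude $N(t)\equiv N(0)$ as in the commensurate case; instead one must bound the components in an appropriate order (using that $S$ is non-increasing, then feeding the resulting bound into the $E$-equation, etc., each time comparing with a linear scalar fractional equation via Lemma~\ref{thm:gronwall}) or control $N(t)$ directly from the integral representation \eqref{integr}. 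The other delicate point is the reconciliation of the quadratic vector field with the global hypotheses of Theorem~\ref{exist:result}, handled by the truncation above.
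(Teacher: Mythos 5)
Your route to existence and uniqueness differs from the paper's. The paper does not truncate: it writes the right-hand side as $F(X)=\frac{S}{N}A_1X+A_2X$ with two constant matrices $A_1,A_2$, and uses $0\le S/N\le 1$ to obtain the affine growth condition \eqref{bound} directly, after which Theorem~\ref{exist:result} applies. Your truncation device addresses a real point that the paper passes over silently --- the quotient $S/N$ is bounded by $1$ only on the positive orthant where $S\le N$, not on all of $\mathbb{R}^8$, so the paper's global estimate implicitly presupposes the very invariance the theorem is supposed to establish --- and in that sense your version is the more careful one, at the cost of having to check afterwards that the truncated and original systems agree along the solution. Either way the existence--uniqueness half goes through.

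On the second half of the statement the comparison is one-sided: the paper's proof simply stops after invoking Theorem~\ref{exist:result} and offers no argument whatsoever that the solution remains in $\Omega$. Your plan --- checking that the vector field points inward on each face $\{x_i=0\}$ and invoking a fractional comparison principle for positivity, then controlling the total population --- is the standard and correct way to attack this, and your observation that $\sum_i {}^{\mathtt C}D^{\alpha_i}x_i=0$ does \emph{not} yield $N(t)\equiv N(0)$ when the orders are incommensurate is precisely why this part is genuinely delicate for model \eqref{model}. That said, you leave the boundedness step as an acknowledged obstacle rather than a finished argument: the sequential component-by-component estimate you sketch, or a direct bound on $N$ from \eqref{integr} via Lemma~\ref{thm:gronwall}, still has to be carried out before the invariance claim is actually proved. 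Since the paper supplies nothing at all for this part, your proposal is more complete in intent than the published proof, but it does not yet close the gap either.
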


\begin{proof}
Let us start by writing the components of the model system \eqref{model} as below: 
\begin{equation}
X(t)=
\begin{pmatrix}
S(t)\\
E(t)\\
I(t)\\
P(t)\\
A(t)\\
H(t)\\
R(t)\\
F(t)
\end{pmatrix}, 
\qquad
F(X)=\begin{pmatrix}
-\beta\frac{I}{N}S-l\beta \frac{H}{N}S-\beta^{'}\frac{P}{N}S
-\beta^{''}\frac{A}{N}S\\
\beta\frac{I}{N}S+l\beta \frac{H}{N}S+ \beta^{'}\frac{P}{N}S 
+ \beta^{''}\frac{A}{N}S -\kappa E\\
\kappa \rho_1 E - (\gamma_a + \gamma_i)I-\delta_i I\\
\kappa \rho_2 E- (\gamma_a + \gamma_i)P-\delta_p P\\
\kappa (1-\rho_1 - \rho_2)E-\delta_a A\\
\gamma_a (I + P) - \gamma_r H - \delta_h H\\
\gamma_i (I + P)+ \gamma_r H\\
\delta_i I + \delta_p P + \delta_a A +\delta_h H
\end{pmatrix}.
\end{equation}
It is not hard to check that the vector field $F$ satisfies the first condition 
of Theorem~\ref{exist:result}. It remains only to show the second condition. 
For this purpose, set
\begin{equation*}
A_1= \begin{pmatrix}
0&0&-\beta & -\beta^{'}&-\beta^{''}&-l\beta &0&0\\
0&0&\beta & \beta^{'}&\beta^{''}&l\beta &0&0\\
0&0&0&0&0&0&0&0\\
0&0&0&0&0&0&0&0\\
0&0&0&0&0&0&0&0\\
0&0&0&0&0&0&0&0\\
0&0&0&0&0&0&0&0\\
0&0&0&0&0&0&0&0\\
\end{pmatrix}, 
\, \, A_2=
\begin{pmatrix}
0&0&0&0&0&0&0&0\\
0&-\kappa&0&0&0&0&0&0\\
0&\kappa \rho_1&-\varpi_i&0&0&0&0&0\\
0&\kappa \rho_2&0&-\varpi_p&0&0&0&0\\
0&\varpi_e&0&0&-\delta_{a}&0&0&0\\
0&0&\gamma_a&\gamma_a&0&-\varpi_h&0&0\\
0&0&\gamma_i&\gamma_i&0&\gamma_r&0&0\\
0&0&\delta_i&\delta_p&\delta_a&\delta_h &0&0\\
\end{pmatrix},
\end{equation*}
with $\varpi_e = \kappa (1-\rho_1-\rho_2)$; 
$\varpi_i = \gamma_a + \gamma_i + \delta_i;\, 
\varpi_p= \gamma_a + \gamma_i + \delta_p$ 
and $\varpi_h= \gamma_r + \delta_h$. Therefore,  
$F$ can be expanded as $F(X)= \frac{S}{N}A_1X + A_2X$, and this leads to 
\[
\lVert F(X) \rVert \leq \lVert A_1X \rVert 
+ \lVert A_2X \rVert < \epsilon +  (A_1 + A_2)\lVert X \rVert,
\]
for any positive constant $\epsilon$. Hence, it follows from Theorem~\ref{exist:result} 
that the model system \eqref{model} subject to \eqref{init_cond} has a unique solution.
\end{proof}

It is of great importance to determine the basic reproduction number $\mathcal{R}_0$ 
for the epidemiological model~\eqref{model}. This quantity represents the number of 
cases one infected case generates on average throughout the infectious period, 
in an otherwise fully susceptible population. Following \cite{MyID:460,wuhan}, 
we prove the following theorem.

\begin{theorem}
The basic reproduction number associated to the model system \eqref{model} is 
\begin{equation}
\label{ro}
\mathcal{R}_0= \frac{\beta \rho_1 (\gamma_a l + \varpi_h)}{\varpi_h \varpi_i} 
+ \frac{\rho_2(\beta \gamma_a l + \beta^{'}\varpi_h)}{\varpi_h \varpi_p} 
+ \frac{\beta^{''}(1-\rho_1 -\rho_2)}{\delta_a}.
\end{equation}
\end{theorem}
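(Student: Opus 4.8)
The plan is to use the next-generation matrix method of van den Driessche and Watmabe, which is the standard tool for computing $\mathcal{R}_0$ in compartmental models, and which the cited references \cite{MyID:460,wuhan} also employ. First I would identify the infected compartments of the model~\eqref{model}, namely $E$, $I$, $P$, and $A$ (the ``disease'' states through which infection is transmitted and progresses), and set aside $S$, $H$, $R$, $F$ as non-infected. The disease-free equilibrium is the state with $E=I=P=A=0$ and $S=N$ (so that $S/N=1$), and I would verify this is an equilibrium of~\eqref{model}.

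Next I would decompose the right-hand sides of the equations for $(E,I,P,A)$ as $\mathcal{F}-\mathcal{V}$, where $\mathcal{F}$ collects the \emph{new infection} terms and $\mathcal{V}$ collects all remaining transition terms (progression, recovery, death). Here $\mathcal{F}$ has a single nonzero entry in the $E$-equation, namely $\tfrac{S}{N}\bigl(\beta I + l\beta H + \beta' P + \beta'' A\bigr)$; however, since $H$ is not an infected compartment in this classification, one must be careful, and following \cite{MyID:460} the $l\beta H/N \cdot S$ term is handled by noting that $H$ is driven by $I$ and $P$ — so effectively its contribution must be tracked through the $F$ matrix structure, or alternatively $H$ is also included among the infected compartments. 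I would then compute the Jacobians $F=\bigl[\partial \mathcal{F}_j/\partial x_k\bigr]$ and $V=\bigl[\partial \mathcal{V}_j/\partial x_k\bigr]$ at the disease-free equilibrium. With the ordering $(E,I,P,A)$ (plus $H$ if needed), $V$ is lower-triangular-ish with diagonal blocks involving $\kappa$, $\varpi_i$, $\varpi_p$, $\delta_a$, $\varpi_h$, so $V^{-1}$ is computable in closed form, and $F$ is a rank-one-type matrix with nonzero row only in the $E$-slot.

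Then $\mathcal{R}_0=\rho(FV^{-1})$, the spectral radius of the next-generation matrix. Because $F$ has only one nonzero row, $FV^{-1}$ has only one nonzero row as well, so its spectral radius equals the single nonzero diagonal entry (equivalently, the trace). I would extract that entry: the $E\to I\to\{H, R, F\}$ pathway contributes $\dfrac{\beta\rho_1(\gamma_a l+\varpi_h)}{\varpi_h\varpi_i}$ (the $\rho_1$ for the fraction of $E$ going to $I$, the $\beta/\varpi_i$ for direct transmission during the $I$-phase, and the $\gamma_a l/(\varpi_h\varpi_i)$ correction for onward hospitalization and transmission from $H$); the $E\to P$ pathway similarly gives $\dfrac{\rho_2(\beta\gamma_a l+\beta'\varpi_h)}{\varpi_h\varpi_p}$; and the $E\to A$ pathway gives $\dfrac{\beta''(1-\rho_1-\rho_2)}{\delta_a}$. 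Summing yields exactly~\eqref{ro}.

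The main obstacle is the correct bookkeeping of the hospitalized class $H$: whether to classify $H$ as an infected compartment (since $l\beta H/N\cdot S>0$ contributes to new infections) or to treat its transmission contribution as arising indirectly via $I$ and $P$. Both choices must give the same $\mathcal{R}_0$, but the intermediate matrices $F$ and $V$ differ, and one has to be consistent about which transitions count as ``new infection'' versus ``progression'' — mislabeling here is the classic pitfall in next-generation computations and is precisely what makes the $\gamma_a l/\varpi_h$ terms appear. I would resolve this by following the convention in \cite{MyID:460,wuhan} (include $H$ among the transmitting/infected states so that $l\beta$ sits in $F$), carry out the block inversion of $V$ carefully, and then confirm the resulting spectral radius matches~\eqref{ro}; the remaining algebra is routine.
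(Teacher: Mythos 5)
Your proposal is correct and follows essentially the same route as the paper: the paper applies the next-generation matrix method with infected compartments ordered $(E,I,P,A,H)$, placing $\beta,\beta',\beta'',l\beta$ in the single nonzero row of $F$ and the transition terms in a lower-triangular $V$, then takes $\mathcal{R}_0=\rho(FV^{-1})$. Your resolution of the bookkeeping question for $H$ (include it among the transmitting compartments so that $l\beta$ sits in $F$) is exactly the convention the paper adopts, and the resulting $(1,1)$ entry of $FV^{-1}$ reproduces \eqref{ro}.
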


\begin{proof}
The proof is done by the well-known next generation matrix approach \cite{r0} 
applied to our model. For this, the following next generation matrices hold:
\begin{equation*}
F= 
\begin{pmatrix}
0&\beta & \beta^{'}&\beta^{''}&l\beta \\
0&0&0&0&0\\
0&0&0&0&0\\
0&0&0&0&0\\
0&0&0&0&0\\
\end{pmatrix}, \qquad 
V=
\begin{pmatrix}
\kappa&0&0&0&0\\
-\kappa \rho_1&\varpi_i&0&0&0\\
-\kappa \rho_2&0&\varpi_p&0&0\\
-\varpi_e&0&0&\delta_{a}&0\\
0&-\gamma_a&-\gamma_a&0&\varpi_h\\
\end{pmatrix}.
\end{equation*}
The basic reproduction ratio $\mathcal{R}_0$ is then 
computed as the spectral radius of $F\cdot V^{-1}$.
\end{proof}

\begin{remark}
The last term in \eqref{ro}, that is, 
$\displaystyle{\frac{\beta^{''}(1-\rho_1 -\rho_2)}{\delta_a}}$, 
quantifies the number of susceptible that one asymptomatic 
individual infects through its infectious lifetime. 
This is an additional term to the basic reproduction number 
obtained in \cite{MyID:460,wuhan}, and it is responsible for 
a greater value to the basic reproduction number for the parameters 
of the new model considered in this paper. 
\end{remark}

Next, we state and prove a theorem related to the stability of the disease 
free equilibrium point 
$$ 
DFE = (N,0,0,0,0,0,0,0).
$$

\begin{theorem}
The disease free equilibrium DFE of system \eqref{model} 
is globally asymptotically stable whenever $\mathcal{R}_0 < 1$. 
\end{theorem}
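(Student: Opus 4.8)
The plan is to combine a comparison argument on the infected compartments with the stability theory of linear incommensurate fractional systems, staying close to the next-generation-matrix setup used above. First I would record that, by Theorem~\ref{theo:uniq}, the region $\Omega$ is positively invariant, so along any solution $0\le S(t)\le N$ and hence $S/N\le 1$. Splitting the state into the uninfected variables $(S,R,F)$ and the infected variables $x=(E,I,P,A,H)^{\mathsf T}$, and keeping the matrices $F$ and $V$ introduced in the proof of the reproduction-number theorem, the infected subsystem of \eqref{model} can be written as ${}^{\mathtt C}D^{(\alpha_i)}x=\tfrac{S}{N}\,Fx-Vx$. Since $F\ge 0$, $x(t)\ge 0$ and $S/N\le 1$, this yields the componentwise differential inequality ${}^{\mathtt C}D^{(\alpha_i)}x\le(F-V)x$, in which $F-V$ is a Metzler matrix (nonnegative off-diagonal entries).

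Next I would invoke a fractional comparison principle for cooperative (Metzler) systems: the solution of the inequality above is bounded componentwise by the solution $z(t)$ of the linear incommensurate problem ${}^{\mathtt C}D^{(\alpha_i)}z=(F-V)z$, $z(0)=x(0)\ge 0$, so that $0\le x(t)\le z(t)$ for all $t\ge 0$, and it suffices to prove $z(t)\to 0$. For this I would use the classical equivalence $\mathcal{R}_0=\rho(FV^{-1})<1\iff s(F-V)<0$ for Metzler matrices (van den Driessche--Watmough) together with the stability criterion for linear incommensurate fractional-order systems: writing the orders over a common denominator and examining the associated characteristic equation $\det\!\big(\operatorname{diag}(\lambda^{\alpha_i})-(F-V)\big)=0$, one checks that all its roots lie in the asymptotic-stability sector because $F-V$ is Hurwitz. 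Hence $z(t)\to 0$ and therefore $(E,I,P,A,H)(t)\to 0$.

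Finally I would feed this back into the remaining equations: once the infected compartments vanish the force of infection tends to $0$, the uninfected variables $(S,R,F)$ settle to constant values, and the trajectory approaches the disease-free set; together with the (linearized) asymptotic stability of the DFE for $\mathcal{R}_0<1$, whose infected block is exactly $F-V$ and can be checked via Matignon's criterion, this gives the asserted global asymptotic stability.

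I expect the main difficulty to be the middle step: justifying the fractional comparison principle in the incommensurate case and pinning down the precise stability criterion for ${}^{\mathtt C}D^{(\alpha_i)}z=(F-V)z$, since with distinct orders one cannot simply differentiate a linear Lyapunov functional $w^{\mathsf T}x$ term by term to obtain a single Caputo inequality, as one would do in the commensurate case. A secondary subtlety is that, because $\alpha_S\in(0,1)$ and the conservation of $N$ is lost for incommensurate orders, the $S$-equation on its own does not force $S\to N$, so ``global stability of the DFE'' should be read within the natural biologically feasible region.
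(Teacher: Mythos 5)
Your proposal is essentially correct in outline but takes a genuinely different route from the paper. The paper works with a linear Lyapunov function $V=b_0E+b_1I+b_2P+b_3A+b_4H$, chooses the constants $b_i$ explicitly so that the coefficients of $I,P,A,H$ in ${}^{\textsc c}D^{\alpha}V$ vanish and the coefficient of $E$ collapses to $\kappa\,\delta_a\varpi_i\varpi_p\varpi_h(\mathcal{R}_0-1)$, and then invokes LaSalle's invariance principle; crucially, to differentiate $V$ term by term it must assume $\alpha_E=\alpha_I=\alpha_P=\alpha_A=\alpha_H=\alpha$, so the paper's argument is really a commensurate-order proof (the vector $(b_0,\dots,b_4)$ is in effect a left eigenvector construction equivalent to the sign of $s(F-V)$). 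You instead bound the infected subsystem ${}^{\mathtt C}D^{(\alpha_i)}x=\tfrac{S}{N}Fx-Vx$ by the linear cooperative system $(F-V)z$ and appeal to spectral theory: the decomposition is correct (I checked it row by row against the $F$ and $V$ of the $\mathcal{R}_0$ theorem), $F-V$ is indeed Metzler, and the van den Driessche--Watmough equivalence $\rho(FV^{-1})<1\iff s(F-V)<0$ applies since $V$ is a nonsingular M-matrix. What your route buys is precisely the incommensurate case that the statement asserts and the paper's Lyapunov computation does not reach; what it costs is two imported lemmas that you correctly flag as the main burden. Two remarks on those: (i) the comparison principle for incommensurate Caputo cooperative systems does hold (the usual first-crossing argument works componentwise because quasi-monotonicity lets you treat each component with its own order), but it is not off-the-shelf in the paper's references and would need a proof or citation; (ii) your assertion that the characteristic roots of $\det(\operatorname{diag}(\lambda^{\alpha_i})-(F-V))=0$ lie in the stability sector \emph{because $F-V$ is Hurwitz} is false for general Hurwitz matrices in the incommensurate setting --- it is the combination Metzler \emph{and} Hurwitz (i.e., $V-F$ a nonsingular M-matrix, so $\operatorname{diag}(\mu_i)+(V-F)$ is nonsingular whenever $\operatorname{Re}\mu_i\ge 0$) that rescues the claim, and you should say so explicitly. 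Your closing caveat about the $S$-equation and reading global stability inside $\Omega$ is well taken and is glossed over in the paper's own LaSalle step.
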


\begin{proof}
To show global stability of $DFE$, we propose the following Lyapunov function:
\[
V(t)= b_0E(t)+b_1I(t)+b_2P(t)+b_3A(t) + b_4H(t),
\]
where $b_0, \, b_1, \, b_2, \, b_3$ and $b_4$ are positive constants to be determined.

By linearity of the fractional operator ${}^{\textsc c}D^{\alpha}$, we have,
for $\alpha_E=\alpha_I=\alpha_P=\alpha_A=\alpha_H=\alpha$, that 
\[
{}^{\textsc c}D^{\alpha} V(t)= b_0 {}^{\textsc c}D^{\alpha}E(t) 
+ b_1 {}^{\textsc c}D^{\alpha}I(t) + b_2 {}^{\textsc c}D^{\alpha}P(t) 
+ b_3 {}^{\textsc c}D^{\alpha}A(t) + b_4{}^{\textsc c}D^{\alpha}H(t),
\]
and from \eqref{model} it follows that
\begin{align*}
{}^{\textsc c}D^{\alpha}V(t)
&= b_0\left(\beta\frac{I}{N}S+l\beta \frac{H}{N}S+ \beta^{'}\frac{P}{N}S 
+ \beta^{''}\frac{A}{N}S -\kappa E \right) + b_1\left(\kappa \rho_1 E 
- (\gamma_a + \gamma_i)I-\delta_i I \right) \\
&+b_2\left(\kappa \rho_2 E- (\gamma_a + \gamma_i)P-\delta_p P \right) 
+ b_3\left( \kappa (1-\rho_1 - \rho_2)E -\delta_a A \right)\\ 
&+ b_4\left( \gamma_a (I + P) - \gamma_r H - \delta_h H\right).
\end{align*}
Note that, because $0 \leq S\leqslant N$, we have 
\begin{align*}
{}^{\textsc c}D^{\alpha}V(t)
&\leqslant b_0\left(\beta I+l\beta H+ \beta^{'}P + \beta^{''}A 
-\kappa E \right) + b_1\left(\kappa \rho_1 E - \varpi_i I \right)  \\
&+b_2\left(\kappa \rho_2 E- \varpi_p P \right) + b_3\left( \kappa (1-\rho_1 - \rho_2)
E -\delta_a A \right) + b _4\left( \gamma_a (I + P) - \varpi_h H\right).
\end{align*}
As a consequence, we obtain the following reduced inequality relationship: 
\begin{align*}
{}^{\textsc c}D^{\alpha}V(t)
&\leqslant  (b_0\beta + b_4\gamma_a -b_1\varpi_i)I 
+ (b_0\beta l -a_4\varpi_h)H + (b_0\beta^{'} + b_4\gamma -b_2\varpi_p)P  \\
& + (b_0 \beta{''} - b_3\delta_a )A + \kappa(b_1\rho_1 + b_2\rho_2 
+ b_3(1-\rho_1 -\rho_2)-b_0)E.
\end{align*}
Thus, we fix the coefficients $b_0, \, b_1, \, b_2, \, b_3, \, b_4$ as follows:
\begin{gather*}
b_0= \delta_a \varpi_i \varpi_p \varpi_h; 
\quad b_1= \left(\beta + \frac{\beta \gamma_a l}{\varpi_h} \right) \delta_a \varpi_h \varpi_p; 
\quad b_2= \left(\beta^{'} + \frac{\beta \gamma_a l}{\varpi_h} \right) \delta_a \varpi_i \varpi_h;\\ 
b_3= \beta_{''}\varpi_i \varpi_h \varpi_p; \quad b_4= \beta l \delta_a \varpi_i \varpi_p.
\end{gather*}
It is easy to check that our function $V$ is continuous and positive definite for all 
$E(t)>0$, $I(t)>0$, $P(t)>0$, $A(t)>0$ and $H(t)>0$. Moreover, we might also notice that
\begin{gather*}
b_0\beta + b_4\gamma_a -b_1\varpi_i= 0,\quad b_0\beta l 
-a_4\varpi_h = 0, \quad b_0\beta^{'} + b_4\gamma -b_2\varpi_p= 0, 
\quad b_0 \beta{''} - b_3\delta_a =0,
\end{gather*}
and
\begin{equation*}
\begin{split}
b_1&\rho_1 + b_2\rho_2 + b_3(1-\rho_1 -\rho_2)-b_0 = \beta \delta_a \rho_1
\varpi_h\varpi_p + \beta \delta_a \rho_1\gamma_a l\varpi_p + \beta^{'} 
\delta_a \rho_2\varpi_h\varpi_i + \beta \delta_a \rho_2\gamma_a l\varpi_i\\
&\quad + \beta^{''}(1-\rho_1 -\rho_2)\varpi_i \varpi_h \varpi_p -\delta_a \varpi_i \varpi_p \varpi_h\\
&= \left(\frac{\beta \delta_a\rho_1 \varpi_h\varpi_p + \beta \delta_a \rho_1\gamma_al\varpi_p 
+ \beta^{'} \delta_a\rho_2\varpi_h\varpi_i + \beta \delta_a \rho_2\gamma_a l\varpi_i 
+\beta^{''}(1-\rho_1 -\rho_2)\varpi_i \varpi_h \varpi_p }{\delta_a \varpi_i \varpi_p \varpi_h} -1 \right)\\
&\quad \times  \delta_a \varpi_i \varpi_p \varpi_h\\
&=\delta_a \varpi_i \varpi_p \varpi_h \left(R_0 -1 \right) .
\end{split}
\end{equation*}
Altogether, we obtain
\[
{}^{\textsc c}D^{\alpha} V(t)
\leqslant \kappa \delta_a\varpi_i \varpi_p \varpi_h (R_0 -1)E.
\]
Therefore, ${}^{\textsc c}D^{\alpha} V(t)\leqslant 0$ whenever  $R_0 <1$. 
In addition, ${}^{\textsc c}D^{\alpha}V(t)=0$ if, and only if, $E=I=P=H=0$. 
Substituting $(E, I, P, H)=(0,0,0,0)$ into system \eqref{model} leads to 
$$
S(t)=S(0), \quad A(t)=A(0), \quad R(t)=R(0), \quad F(t)=F(0).
$$
Thus, we deduce that $A(0)=R(0)=F(0)=0$ and $S(0)=N$, and the largest compact 
invariant set $\Gamma =\lbrace (S, E, I, P, A, H, R, F)
\in \mathbb{R}^8_{+}:{}^{\textsc c}D^{\alpha} V(t)=0\rbrace$ becomes 
the disease free equilibrium point $DFE$. Finally, by the Lasalle 
invariance principle~\cite{LASALLE196857}, we conclude that the disease 
free equilibrium $DFE$ is globally asymptotically stable.
\end{proof}

Global stability of the incommensurate fractional form of model~\eqref{model} 
is confirmed through Ulam--Hyers stability~\cite{jung2011hyers}, with recent 
COVID-19 model applications in~\cite{Baba2020}. We first discuss the conditions 
necessary to ensure the positiveness of the solutions.

\begin{lemma}
\label{lemm:lips}
The function $\textbf{G}(t,\textbf{X}(t))$ 
fulfills the Lipschitz conditions, specifically:
\begin{equation}
\|\textbf{G}(t, \textbf{X}(t)) - \textbf{G}(t, \textbf{X}^*(t))\| 
\leq \Sigma \|\textbf{X}(t)-\textbf{X}^*(t)\|,
\end{equation}
where
\begin{align*}
\textbf{G}(t, \textbf{X}(t)) 
=& (G_1(t, S(t)), G_2(t, E(t)), G_3(t, I(t)), G_4(t, P(t)),\\
&G_5(t, A(t)), G_6(t, H(t)), G_7(t, R(t)), G_8(t, F(t)))^{\intercal}
\end{align*}
is the vector of functions describing the system's dynamics~\eqref{model}, and
\begin{equation}
\begin{split}
\Sigma=max\{|\beta+l \beta + \beta' + \beta''|, |\kappa|, |\gamma_a 
+ \gamma_i +\delta_i|, |\gamma_a + \gamma_i 
+\delta_p|, |\delta_a|,|\gamma_r + \delta_h|\}.
\end{split}
\end{equation}
\end{lemma}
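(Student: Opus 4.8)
The plan is to work inside the invariant region. By Theorem~\ref{theo:uniq}, every solution of~\eqref{model}--\eqref{init_cond} remains in $\Omega$, so each compartment stays bounded above by $N$; in particular the normalized quantities $S/N$, $I/N$, $H/N$, $P/N$, $A/N$ lie in $[0,1]$ and $\beta I+l\beta H+\beta'P+\beta''A\le(\beta+l\beta+\beta'+\beta'')N$ on $\Omega$. This uniform \emph{a priori} bound is exactly what upgrades a local Lipschitz estimate to the global one claimed. Writing $\textbf{G}(t,\textbf{X})$ for the right-hand side of~\eqref{model} (equivalently $\textbf{G}(\textbf{X})=\tfrac{S}{N}A_1\textbf{X}+A_2\textbf{X}$, the decomposition used in the proof of Theorem~\ref{theo:uniq}), I would estimate each component $|G_i(t,\textbf{X})-G_i(t,\textbf{X}^*)|$ separately and then reassemble via $\|\textbf{G}(t,\textbf{X})-\textbf{G}(t,\textbf{X}^*)\|^2=\sum_{i=1}^{8}|G_i(t,\textbf{X})-G_i(t,\textbf{X}^*)|^2$.

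For the six components $G_3,\dots,G_8$, which depend linearly on $\textbf{X}$, subtracting and using the triangle inequality immediately gives bounds governed by the coefficient attached to the ``own'' variable of each equation, namely $\gamma_a+\gamma_i+\delta_i$, $\gamma_a+\gamma_i+\delta_p$, $\delta_a$, $\gamma_r+\delta_h$, and $0$, $0$ for $R$ and $F$ --- precisely the entries appearing inside the maximum defining $\Sigma$, the two zeros not affecting it. The force-of-infection term, which is the only nonlinearity, occurs in both $G_1$ and $G_2$; there I would apply the elementary identity $ab-a^*b^*=a(b-b^*)+b^*(a-a^*)$ to split the difference, and then use $S/N\le1$ together with the bound on $\beta I+l\beta H+\beta'P+\beta''A$ above, so that the coefficient multiplying the coordinate differences does not exceed $|\beta+l\beta+\beta'+\beta''|$, while the $-\kappa E$ part of $G_2$ contributes the factor $|\kappa|$. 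Taking the maximum over all eight coefficients produced this way yields $\Sigma$, and a final triangle/Cauchy--Schwarz step gives $\|\textbf{G}(t,\textbf{X})-\textbf{G}(t,\textbf{X}^*)\|\le\Sigma\|\textbf{X}-\textbf{X}^*\|$.

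The delicate point, and the main obstacle, is the treatment of $G_1$ and $G_2$: their differences genuinely involve not only the ``own'' variable but also $I-I^*$, $H-H^*$, $P-P^*$ and $A-A^*$, so one must check that these off-diagonal contributions --- once controlled using confinement to $\Omega$ --- are reabsorbed without pushing the Lipschitz constant beyond $\Sigma$. This is exactly where the normalizations by $N$ and the population constraint $S+E+\dots+F\le N$ are indispensable, and where a careless estimate would produce a spuriously larger constant. The remaining work --- the estimates for $G_3,\dots,G_8$ and the passage from coordinatewise bounds to the Euclidean-norm bound --- is routine bookkeeping with the triangle inequality.
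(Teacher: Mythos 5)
Your proposal and the paper's proof part ways at a basic point of interpretation. The paper does \emph{not} prove the full vector Lipschitz estimate you are aiming at: it reads each $G_i$ as a function of $t$ and of its \emph{own} state variable only (note the notation $G_1(t,S(t))$, $G_3(t,I(t))$, etc.), so that in the difference $G_1(t,S)-G_1(t,S^*)$ the quantities $I/N$, $H/N$, $P/N$, $A/N$ are frozen coefficients bounded by $1$, the difference factors exactly as $\bigl(-\beta\tfrac{I}{N}-l\beta\tfrac{H}{N}-\beta'\tfrac{P}{N}-\beta''\tfrac{A}{N}\bigr)(S-S^*)$, and the cross-terms you worry about never appear. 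This is also why the paper can assert $\|G_7-G_7^*\|=\|G_8-G_8^*\|=0$: the $R$ and $F$ equations do not contain $R$ or $F$. You instead attack the honest statement that the displayed inequality literally asserts, with both arguments of every component allowed to vary. That is a legitimate (indeed stronger) goal, but your argument has a genuine gap exactly at the point you flag and then wave through.

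Concretely, two steps fail. First, the claim that the off-diagonal contributions are ``reabsorbed without pushing the Lipschitz constant beyond $\Sigma$'' is false for the individual components: for instance $G_7=\gamma_i(I+P)+\gamma_r H$ has full Lipschitz constant $\sqrt{2\gamma_i^2+\gamma_r^2}$ in the Euclidean norm, not $0$, and $G_3=\kappa\rho_1E-\varpi_iI$ has constant $\sqrt{(\kappa\rho_1)^2+\varpi_i^2}>\varpi_i=\Sigma_3$; nothing in your argument shows these are dominated by $\max_i\Sigma_i$ (numerically they happen to be for the fitted parameters, but that is an accident, not a proof). Second, your reassembly step is incompatible with the target constant: if each component obeys $|G_i-G_i^*|\le L_i\|\textbf{X}-\textbf{X}^*\|$, then summing squares gives $\|\textbf{G}-\textbf{G}^*\|\le\bigl(\sum_iL_i^2\bigr)^{1/2}\|\textbf{X}-\textbf{X}^*\|$, and since $G_1$ and $G_2$ share the force-of-infection term their constants $L_1,L_2$ are both of order $\Sigma_1$, so $\bigl(\sum_iL_i^2\bigr)^{1/2}\gtrsim\sqrt{2}\,\Sigma_1>\Sigma$. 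So your route, carried out carefully, proves Lipschitz continuity of $\textbf{G}$ on $\Omega$ (which is all that is really needed downstream in Theorem~\ref{theo: 1}) but with a constant strictly larger than the $\Sigma$ of the statement; to get $\Sigma$ itself you would have to either switch to the componentwise reading the paper uses, or choose non-Euclidean norms on domain and range and redo the bookkeeping.
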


\begin{proof}
Summarizing that $S(t)$ and $S^*(t)$ 
are couple functions yields the following equality:
\begin{equation}
\|G_1(t,S(t)) - G_1(t, S^*(t))\|
=\Big\Vert\ \left( -\beta\frac{I}{N}-l\beta 
\frac{H}{N}-\beta^{'}\frac{P}{N}-\beta^{''}
\frac{A}{N}\right)(S(t)-S^*(t))\Big\Vert\ .
\end{equation}
By defining
\begin{equation}
\Sigma_1=|\beta+l \beta + \beta' + \beta''|,
\end{equation}
we can infer the inequality
\begin{equation}
\label{eq18}
\|G_1(t,S(t)) - G_1(t, S^*(t))\|\leq\Sigma_1\|S(t)-S^*(t)\|.
\end{equation}
Proceeding similarly for the other functions yields
\begin{align}
\label{eq19}
\|G_2(t,E(t)) - G_2(t, E^*(t))\| &\leq\Sigma_2\|E(t)-E^*(t)\| \notag \\
\|G_3(t,I(t)) - G_3(t, {I}^*(t))\| &\leq\Sigma_3\|I(t)-{I}^*(t)\| \notag \\
\|G_4(t,P(t)) - G_4(t, P^*(t))\| &\leq\Sigma_4\|P(t)-P^*(t)\|  \\
\|G_5(t,A(t)) - G_5(t, A^*(t))\| &\leq\Sigma_5\|A(t)-A^*(t)\| \notag \\
\|G_6(t,H(t)) - G_6(t, H^*(t))\| &\leq\Sigma_6\|H(t)-H^*(t)\| \notag \\
\|G_7(t,R(t)) - G_7(t, R^*(t))\| &=0 \notag \\
\|G_8(t,F(t)) - G_8(t, F^*(t))\| &=0 \notag ,
\end{align}
with the $\Sigma$ values specified as
\begin{align}
\label{eq20}
\Sigma_2 &= | \kappa| \notag ,\\
\Sigma_3 &= |\gamma_a+\gamma_i+\delta_i|\notag ,\\
\Sigma_4 &= |\gamma_a+\gamma_i+\delta_p|  ,\\
\Sigma_5 &= |\delta_a| \notag ,\\
\Sigma_6 &= |\gamma_r+\delta_h| \notag .
\end{align}
This analysis, from equations \eqref{eq18} to \eqref{eq20}, 
shows that all eight functions, $F_i$, meet the Lipschitz condition, 
validating their properties for the system \eqref{model}.
\end{proof}

\begin{theorem}
\label{theo: 1}
Given the conditions of Lemma~\ref{lemm:lips}, if the inequality    
\begin{equation}
\Sigma \max_{i} \frac{T^{\alpha_i}}{\Gamma(\alpha_i +1)}<1,
\quad i={S,E,I, P,A,H,R,F}, 
\end{equation}
is satisfied, then the system \eqref{model} 
admits a unique, positive solution.
\end{theorem}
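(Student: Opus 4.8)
The plan is to prove the result by a Banach fixed point argument applied to the equivalent Volterra integral formulation of \eqref{model}, using Lemma~\ref{lemm:lips} to obtain a contraction and the hypothesis to make the contraction constant strictly less than one. Writing $\textbf{X} = (S,E,I,P,A,H,R,F)^{\intercal}$ and recalling \eqref{integr}, a continuous function is a solution of \eqref{model}--\eqref{init_cond} on $[0,T]$ if and only if it is a fixed point of the operator $\mathcal{T}$ defined componentwise by
\[
(\mathcal{T}\textbf{X})_i(t) = X_i(0) + \frac{1}{\Gamma(\alpha_i)}\int_0^t (t-s)^{\alpha_i-1}\, G_i(s,X_i(s))\, ds,
\qquad i \in \{S,E,I,P,A,H,R,F\}.
\]
Since each $G_i$ is continuous (indeed polynomial) in its argument and $X_i$ is continuous, the integrand is integrable on $[0,t]$ exactly as in \eqref{estim}, and dominated convergence shows that $\mathcal{T}$ maps $\mathcal{C} = C([0,T],\mathbb{R}^8)$, equipped with the norm $\|\textbf{X}\|=\sup_{t\in[0,T]}\|\textbf{X}(t)\|$, into itself; this space is complete.

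The core step is the contraction estimate. For $\textbf{X},\textbf{X}^* \in \mathcal{C}$, combining the componentwise Lipschitz bounds \eqref{eq18}--\eqref{eq20} from Lemma~\ref{lemm:lips} with the elementary identity $\frac{1}{\Gamma(\alpha_i)}\int_0^t (t-s)^{\alpha_i-1}\,ds = \frac{t^{\alpha_i}}{\Gamma(\alpha_i+1)} \le \frac{T^{\alpha_i}}{\Gamma(\alpha_i+1)}$ gives, for each component and then taking the supremum over $[0,T]$,
\[
\|\mathcal{T}\textbf{X}-\mathcal{T}\textbf{X}^*\| \le \Big(\Sigma\,\max_i \frac{T^{\alpha_i}}{\Gamma(\alpha_i+1)}\Big)\,\|\textbf{X}-\textbf{X}^*\|.
\]
By hypothesis the factor in parentheses is strictly less than $1$, so $\mathcal{T}$ is a contraction on $\mathcal{C}$ and the Banach fixed point theorem yields a unique fixed point, that is, a unique continuous solution of \eqref{model}--\eqref{init_cond} on $[0,T]$.

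It then remains to check positivity of this unique solution. I would establish forward invariance of $\mathbb{R}^8_+$ through the boundary behaviour of the vector field: on each face $\{X_i = 0\}$ with the remaining coordinates nonnegative, the corresponding component of $\textbf{G}$ is nonnegative, since in every equation of \eqref{model} each negative term carries the vanishing coordinate as a factor --- e.g.\ $\left.{}^{\textsc c}D^{\alpha_S}S\right|_{S=0}=0$, $\left.{}^{\textsc c}D^{\alpha_E}E\right|_{E=0} = \beta\frac{I}{N}S + l\beta\frac{H}{N}S + \beta'\frac{P}{N}S + \beta''\frac{A}{N}S \ge 0$, and analogously for $I,P,A,H,R,F$. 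Coupling this with the minimum principle for the Caputo derivative (if $g\in C[0,T]$ with continuous Caputo derivative attains its minimum over $[0,t_1]$ at $t_1$, then ${}^{\textsc c}D^{\alpha}g(t_1)\le 0$) rules out any first crossing of $\partial\mathbb{R}^8_+$, so the solution remains nonnegative on $[0,T]$; this conclusion is, in any case, also contained in Theorem~\ref{theo:uniq}.

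I expect the genuinely delicate point to be the positivity part rather than the contraction estimate, which is routine bookkeeping with the constants $\Sigma_i$ and the kernel bound. The subtlety is that the naive phase-space reasoning ``the field points inward, hence the orthant is invariant'' is not valid verbatim for nonlocal Caputo dynamics, so one must invoke the correct minimum (or comparison) principle and handle the case where several coordinates vanish simultaneously; the rest follows the standard template for fractional initial value problems.
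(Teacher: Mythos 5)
Your contraction argument is essentially identical to the paper's proof: the authors also pass to the Volterra integral form \eqref{eq11}, define the Picard operator $\mathcal{P}$ on $C([0,T],\mathbb{R}^8)$, and use the componentwise Lipschitz constants of Lemma~\ref{lemm:lips} together with the kernel bound $\int_0^t \frac{(t-\tau)^{\alpha_i-1}}{\Gamma(\alpha_i)}\,d\tau \le \frac{T^{\alpha_i}}{\Gamma(\alpha_i+1)}$ to get the contraction constant $\Sigma\max_i T^{\alpha_i}/\Gamma(\alpha_i+1)<1$, exactly as you do. Where you genuinely go beyond the paper is positivity: the published proof simply asserts at the end that the fixed point is ``unique and positive'' without argument, whereas you correctly identify that invariance of $\mathbb{R}^8_+$ requires a separate step --- checking that each component of $\textbf{G}$ is nonnegative on the corresponding face $\{X_i=0\}$ and invoking a Caputo minimum/comparison principle rather than the naive ``inward-pointing field'' heuristic, which is not valid verbatim for nonlocal dynamics. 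That addition fills a real gap in the paper's proof (one could also lean on Theorem~\ref{theo:uniq}, as you note, though that theorem's proof likewise does not spell out invariance of $\Omega$). In short: same route for existence and uniqueness, with a more careful and more complete treatment of the positivity claim.
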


\begin{proof}
Using the integral form of the solution, we can derive 
the state variables expressed in terms of $F_i$ as follows:
\begin{equation}
\label{eq11}
\begin{cases}
S(t)=S(0) + \frac{1}{\Gamma(\alpha_S)}\int_0^t {(t-\tau)^{\alpha_S-1}G_1(\tau,S(\tau))d\tau},\\
E(t)=E(0) + \frac{1}{\Gamma(\alpha_E)}\int_0^t {(t-\tau)^{\alpha_E-1}G_2(\tau,E(\tau))d\tau},\\
I(t)=I(0) + \frac{1}{\Gamma(\alpha_{I})}\int_0^t {(t-\tau)^{\alpha_{I}-1}G_3(\tau,I(\tau))d\tau},\\
P(t)=P(0) + \frac{1}{\Gamma(\alpha_{P})}\int_0^t {(t-\tau)^{\alpha_{P}-1}G_4(\tau,P(\tau))d\tau},\\
A(t)=A(0) + \frac{1}{\Gamma(\alpha_A)}\int_0^t {(t-\tau)^{\alpha_A-1}G_5(\tau,A(\tau))d\tau},\\
H(t)=H(0) + \frac{1}{\Gamma(\alpha_H)}\int_0^t {(t-\tau)^{\alpha_H-1}G_6(\tau,H(\tau))d\tau},\\
R(t)=R(0) + \frac{1}{\Gamma(\alpha_R)}\int_0^t {(t-\tau)^{\alpha_R-1}G_7(\tau,R(\tau))d\tau},\\
F(t)=F(0) + \frac{1}{\Gamma(\alpha_F)}\int_0^t {(t-\tau)^{\alpha_F-1}G_8(\tau,F(\tau))d\tau}.
\end{cases}
\end{equation}
By applying the Picard iteration~\cite{Boudaoui2021} 
to Equation~\eqref{eq11}, we obtain the subsequent equations:
\begin{equation}
\begin{cases}
S_{n+1}(t)=S(0) + \frac{1}{\Gamma(\alpha_S)}\int_0^t {(t-\tau)^{\alpha_S-1}G_1(\tau,S(\tau))d\tau},\\
E_{n+1}(t)=E(0) + \frac{1}{\Gamma(\alpha_E)}\int_0^t {(t-\tau)^{\alpha_E-1}G_2(\tau,E(\tau))d\tau},\\
I_{{n+1}}(t)=I(0) + \frac{1}{\Gamma(\alpha_{I})}\int_0^t {(t-\tau)^{\alpha_{I}-1}G_3(\tau,I(\tau))d\tau},\\
P_{{n+1}}(t)=P(0) + \frac{1}{\Gamma(\alpha_{P})}\int_0^t {(t-\tau)^{\alpha_{P}-1}G_4(\tau,P(\tau))d\tau},\\
A_{n+1}(t)=A(0) + \frac{1}{\Gamma(\alpha_A)}\int_0^t {(t-\tau)^{\alpha_A-1}G_5(\tau,A(\tau))d\tau},\\
H_{n+1}(t)=H(0) + \frac{1}{\Gamma(\alpha_H)}\int_0^t {(t-\tau)^{\alpha_H-1}G_6(\tau,H(\tau))d\tau},\\
R_{n+1}(t)=R(0) + \frac{1}{\Gamma(\alpha_R)}\int_0^t {(t-\tau)^{\alpha_R-1}G_7(\tau,R(\tau))d\tau},\\
F_{n+1}(t)=F(0) + \frac{1}{\Gamma(\alpha_F)}\int_0^t {(t-\tau)^{\alpha_F-1}G_8(\tau,F(\tau))d\tau}.\\
\end{cases}
\end{equation}
Therefore, the system solution \eqref{model} can be expressed in the form
\[
\textbf{X}(t) = \mathcal{P}(\textbf{X}(t)),
\]
where $\textbf{X}(t) = (S(t), E(t), I(t), P(t), A(t), H(t), R(t), F(t))^{\intercal}$ 
is the state vector, and $\mathcal{P}: C([0, T], \mathbb{R}^8) \to C([0, T], \mathbb{R}^8)$ 
denotes the Picard operator. This operator is defined as follows:
\begin{equation}
\begin{split}
\mathcal{P}(\textbf{X}(t)) =\textbf{X}(0) 
&+ \int_0^t \mathrm{diag}\Bigg(  \frac{(t-\tau)^{\alpha_{S}-1}}{\Gamma(\alpha_{S})}, 
\frac{(t-\tau)^{\alpha_{E}-1}}{\Gamma(\alpha_{E})}, 
\frac{(t-\tau)^{\alpha_{I}-1}}{\Gamma(\alpha_{I})}, 
\frac{(t-\tau)^{\alpha_{P}-1}}{\Gamma(\alpha_{P})},\\
& \frac{(t-\tau)^{\alpha_{A}-1}}{\Gamma(\alpha_{A})}, 
\frac{(t-\tau)^{\alpha_{H}-1}}{\Gamma(\alpha_{H})},
\frac{(t-\tau)^{\alpha_{R}-1}}{\Gamma(\alpha_{R})}, 
\frac{(t-\tau)^{\alpha_{F}-1}}{\Gamma(\alpha_{F})}\Bigg) 
\mathbf{G}(\tau, \mathbf{X}(\tau)) \, d\tau.
\end{split}
\label{eq:longEquation}
\end{equation}
Simultaneously, we encounter the series of inequalities below:
\begin{equation}
\begin{split}
\|\mathcal{P}(\textbf{X}(t)) - \mathcal{P}(\textbf{X}^*(t))\|
&=\Big\Vert\ 
\int_0^t \mathrm{diag}\Bigg( \frac{(t-\tau)^{\alpha_{S}-1}}{
\Gamma(\alpha_{S})}, \frac{(t-\tau)^{\alpha_{E}-1}}{\Gamma(\alpha_{E})}, 
\frac{(t-\tau)^{\alpha_{I}-1}}{\Gamma(\alpha_{I})}, \\ 
&\frac{(t-\tau)^{\alpha_{P}-1}}{\Gamma(\alpha_{P})}, 
\frac{(t-\tau)^{\alpha_{A}-1}}{\Gamma(\alpha_{A})}, 
\frac{(t-\tau)^{\alpha_{H}-1}}{\Gamma(\alpha_{H})}, 
\frac{(t-\tau)^{\alpha_{R}-1}}{\Gamma(\alpha_{R})},
\frac{(t-\tau)^{\alpha_{F}-1}}{\Gamma(\alpha_{F})} \Bigg) \\ 
& \times \left( \mathbf{G}(\tau, \mathbf{X}(\tau)) 
- \mathbf{G}(\tau, \mathbf{X}^*(\tau)) \right) \, d\tau \Big\Vert\ \\ 
&\leq \Big\Vert\ 
\int_0^t \mathrm{diag}\Bigg( \frac{(t-\tau)^{\alpha_{S}-1}}{\Gamma(\alpha_{S})}, 
\frac{(t-\tau)^{\alpha_{E}-1}}{\Gamma(\alpha_{E})},
\frac{(t-\tau)^{\alpha_{I}-1}}{\Gamma(\alpha_{I})}, \\ 
&\frac{(t-\tau)^{\alpha_{P}-1}}{\Gamma(\alpha_{P})}, 
\frac{(t-\tau)^{\alpha_{A}-1}}{\Gamma(\alpha_{A})}, 
\frac{(t-\tau)^{\alpha_{H}-1}}{\Gamma(\alpha_{H})}, 
\frac{(t-\tau)^{\alpha_{R}-1}}{\Gamma(\alpha_{R})},
\frac{(t-\tau)^{\alpha_{F}-1}}{\Gamma(\alpha_{F})}\Bigg)
d\tau  \Big\Vert\ \\  
&\times \sup_{\tau \in [0,T]} \| \mathbf{G}(\tau, \mathbf{X}(\tau)) 
- \mathbf{G}(\tau, \mathbf{X}^*(\tau)) \| \\ 
&\leq \max_{i= S,E,I, P,A,H,R,F} \int_0^t 
\frac{(t-\tau)^{\alpha_{i}-1}}{\Gamma(\alpha_{i})} d\tau 
\sup_{\tau \in [0,T]} \| \mathbf{G}(\tau, \mathbf{X}(\tau)) 
- \mathbf{G}(\tau, \mathbf{X}^*(\tau)) \| \\
&\leq \Sigma  \max_{i= S,E,I, P,A,H,R,F} \frac{T^{\alpha_i}}{\Gamma(\alpha_i +1)}  
\sup_{\tau \in [0,T]} \| \mathbf{X}(\tau) - \mathbf{X}^*(\tau) \|.
\end{split}
\end{equation}
Given that $\Sigma \max_{i \in \{S,E,I, P,A,H,R,F\}} 
\frac{T^{\alpha_i}}{\Gamma(\alpha_i +1)} < 1$ for $t \leq T$, 
the operator $\mathcal{P}$ is established as a contraction. Consequently, 
system \eqref{model} is guaranteed to have a unique and positive 
solution, thereby completing the proof.
\end{proof}

To provide a foundation for the subsequent discussion, 
we introduce the inequality expressed as follows:
\begin{equation}
\label{eq:10}
|{}^{C}D^{\alpha}_{0+}\textbf{X}(t)-\textbf{G}(t, 
\textbf{X}(t))|\leq \epsilon, \quad t\in [0,T],
\end{equation}
in which $\alpha=\{\alpha_S,\alpha_E,\alpha_I,\alpha_P,\alpha_A,
\alpha_H,\alpha_R,\alpha_F\}$ and $\epsilon=\{\epsilon_i | i=1,\ldots,8\}$. 
We say a function $\Bar{\textbf{X}}\in \mathbb{R}^8_+$ is a solution of 
\eqref{eq:10} if, and only if, there exists a perturbation 
$h \in \mathbb{R}^8_+$ satisfying
\begin{itemize}
\item[1.] $|h(t)| \leq \epsilon$
\item[2.] ${}^{C}D^{\alpha}_{0+}\Bar{\textbf{X}}(t)
=\textbf{G}(t, \Bar{\textbf{X}}(t))+h(t), \quad t\in [0,T]$.
\end{itemize}
Notably, by applying Equation \eqref{eq11} alongside property 2 mentioned above, 
straightforward simplification reveals that any function $\bar{\mathbf{X}} 
\in \mathbb{R}^8_+$ meeting the conditions of Equation \eqref{eq:10} 
likewise fulfills the following associated integral inequality:
\begin{equation}
\label{eq:32}
|\Bar{\textbf{X}}(t)-\Bar{\textbf{X}}(0)
- \frac{1}{\Gamma(\alpha)}\int_0^t(t-\tau)^{\alpha-1}
\textbf{G}(\tau,\Bar{\textbf{X}}(\tau))| 
\leq \frac{T^\alpha}{\Gamma(\alpha+1)} \epsilon.
\end{equation}

Let $\mathfrak{G}=C([0,T];\mathbb{R})$ denote the Banach space 
of all continuous functions from $[0,T]$ to $\mathbb{R}$ equipped 
with the norm $\|\textbf{X}\|_{\mathfrak{G}}=\sup_{t\in [0,T]}\{|\textbf{X}|\}$, 
where $|\textbf{X}|=|S(t)|+|E(t)|+|I_A(t)|+|I_S(t)|+|R(t)|+|D(t)|+|W(t)|$.

The fractional order model \eqref{model} achieves Ulam--Hyers stability 
if there are some $\Sigma > 0$ ensuring that, for any given 
$\bar{\epsilon}=\mathbb{R}_+$, and for every solution $\Bar{\textbf{X}}$ 
meeting the conditions of \eqref{eq:10}, a corresponding solution 
$\textbf{X}$ to \eqref{model} can be found where 
\begin{equation}
\|\Bar{\textbf{X}}(t) - \textbf{X}(t) \|_{\mathfrak{G}} 
\leq \Sigma \bar{\epsilon}, \quad t \in [0,T].
\end{equation}
Moreover, this model is deemed to be generalized Ulam--Hyers stable 
if a continuous function $\Sigma_{G}: \mathbb{R}_+ \to \mathbb{R}_+$ exists, 
satisfying $\Sigma_G(0)=0$. This condition requires that, for any solution 
$\Bar{\textbf{X}}$ of \eqref{eq:10}, there must be a corresponding 
solution $\textbf{X}$ of \eqref{model} for which
\begin{equation}
\|\Bar{\textbf{X}}(t) - \textbf{X}(t) \|_{\mathfrak{G}} 
\leq \Sigma_G \bar{\epsilon}, \quad t \in [0,T].
\end{equation}
We proceed to detail the stability results for the fractional order model.

\begin{theorem}
Assuming the conditions and conclusions of Lemma~\ref{lemm:lips} 
and Theorem~\ref{theo: 1} are satisfied, i.e. $\Sigma \max_i
\frac{T^{\alpha_i}}{\Gamma({\alpha_i}+1)}<1$, it follows that 
the model specified in \eqref{model} exhibits generalized Ulam--Hyers stability.
\end{theorem}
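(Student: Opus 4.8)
The plan is to combine the integral-inequality characterization of approximate solutions recorded in \eqref{eq:32} with the contraction estimate for the Picard operator $\mathcal{P}$ already established in the proof of Theorem~\ref{theo: 1}. Fix $\bar{\epsilon}\in\mathbb{R}_+$ and let $\bar{\textbf{X}}$ be any function satisfying \eqref{eq:10} with $|h(t)|\le\bar{\epsilon}$. By the discussion preceding \eqref{eq:32}, $\bar{\textbf{X}}$ then obeys
\[
\big\| \bar{\textbf{X}}(t) - \mathcal{P}(\bar{\textbf{X}})(t) \big\|_{\mathfrak{G}} \;\le\; \max_{i= S,E,I,P,A,H,R,F}\frac{T^{\alpha_i}}{\Gamma(\alpha_i+1)}\,\bar{\epsilon},
\]
the $\max_i$ arising because in the incommensurate setting the constant $T^{\alpha}/\Gamma(\alpha+1)$ of \eqref{eq:32} must be read componentwise and then dominated when passing to the norm $\|\cdot\|_{\mathfrak{G}}$. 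Let $\textbf{X}$ denote the unique solution of \eqref{model} provided by Theorem~\ref{theo: 1}, i.e. the unique fixed point $\textbf{X}=\mathcal{P}(\textbf{X})$, subject to the same initial data $\textbf{X}(0)=\bar{\textbf{X}}(0)$.

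Next I would estimate, by the triangle inequality and the fixed-point property,
\[
\|\bar{\textbf{X}}(t)-\textbf{X}(t)\|_{\mathfrak{G}} \le \big\|\bar{\textbf{X}}(t)-\mathcal{P}(\bar{\textbf{X}})(t)\big\|_{\mathfrak{G}} + \big\|\mathcal{P}(\bar{\textbf{X}})(t)-\mathcal{P}(\textbf{X})(t)\big\|_{\mathfrak{G}}.
\]
The first term is controlled by the bound displayed above. The second is, by exactly the chain of inequalities in the proof of Theorem~\ref{theo: 1} (the Lipschitz property of $\textbf{G}$ from Lemma~\ref{lemm:lips} together with $\int_0^t(t-\tau)^{\alpha_i-1}\,d\tau = t^{\alpha_i}/\alpha_i \le T^{\alpha_i}/\alpha_i$), bounded by $\Lambda\,\|\bar{\textbf{X}}-\textbf{X}\|_{\mathfrak{G}}$ with $\Lambda:=\Sigma\max_i T^{\alpha_i}/\Gamma(\alpha_i+1)$. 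Since $\Lambda<1$ by hypothesis, I may absorb the second term on the left-hand side to obtain
\[
\|\bar{\textbf{X}}(t)-\textbf{X}(t)\|_{\mathfrak{G}} \;\le\; \frac{\max_i T^{\alpha_i}/\Gamma(\alpha_i+1)}{1-\Lambda}\,\bar{\epsilon}, \qquad t\in[0,T],
\]
which is precisely Ulam--Hyers stability with the constant $\dfrac{\max_i T^{\alpha_i}/\Gamma(\alpha_i+1)}{1-\Lambda}$.

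To upgrade this to generalized Ulam--Hyers stability I would simply set
\[
\Sigma_G(r) := \frac{\max_i T^{\alpha_i}/\Gamma(\alpha_i+1)}{1-\Lambda}\,r, \qquad r\in\mathbb{R}_+,
\]
which is a linear, hence continuous, map $\mathbb{R}_+\to\mathbb{R}_+$ with $\Sigma_G(0)=0$, and the estimate above reads $\|\bar{\textbf{X}}(t)-\textbf{X}(t)\|_{\mathfrak{G}}\le\Sigma_G(\bar{\epsilon})$ on $[0,T]$, which is the required conclusion.

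The two norm estimates are routine once Lemma~\ref{lemm:lips} and the integral representation \eqref{eq:longEquation} are in hand; the only points that need genuine care are the passage from the differential inequality \eqref{eq:10} to the integral inequality \eqref{eq:32} in the incommensurate case — where the single exponent $\alpha$ in \eqref{eq:32} really stands for the vector $(\alpha_i)$ acting through the diagonal kernel — and the observation that the denominator $1-\Lambda$ is strictly positive, which is exactly the standing hypothesis inherited from Theorem~\ref{theo: 1} and is what legitimizes the absorption step. Everything else is bookkeeping.
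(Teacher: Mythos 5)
Your argument is correct and follows essentially the same route as the paper: both split $\bar{\textbf{X}}-\textbf{X}$ via the triangle inequality into the defect term controlled by \eqref{eq:32} and the term $\mathcal{P}(\bar{\textbf{X}})-\mathcal{P}(\textbf{X})$ controlled by the Lipschitz/contraction estimate, then absorb the latter using $\Sigma\max_i T^{\alpha_i}/\Gamma(\alpha_i+1)<1$, arriving at the same constant (the paper writes it as $\max_i T^{\alpha_i}/(\Gamma(\alpha_i+1)-T^{\alpha_i}\Sigma)$, which equals your $\bigl(\max_i T^{\alpha_i}/\Gamma(\alpha_i+1)\bigr)/(1-\Lambda)$). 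Your explicit packaging of $\Sigma_G$ as a linear function with $\Sigma_G(0)=0$ and your remark that $\textbf{X}(0)=\bar{\textbf{X}}(0)$ is needed are minor clarifications of points the paper leaves implicit, not a different method.
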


\begin{proof}
Given that $\textbf{X}$ is a unique solution to \eqref{model} confirmed 
by Lemma~\ref{lemm:lips} and Theorem~\ref{theo: 1}, and $\Bar{\textbf{X}}$ 
meets the criteria of \eqref{eq:10}, reference to equations \eqref{eq11} 
and \eqref{eq:32} leads us to conclude that for any $\epsilon \in \mathbb{R}_+^8$ 
and $t \in [0, T]$, the following relationship holds:
\begin{equation}
\begin{split}
\| \Bar{\textbf{X}} - \textbf{X} \|_{\mathfrak{G}}  
&= \sup_{t \in [0,T]} |\Bar{\textbf{X}} - \textbf{X}| \\
&= \sup_{t \in [0,T]} \left| \Bar{\textbf{X}} - {\textbf{X}}_0 
- \frac{1}{\Gamma(\alpha)} \int_0^t (t - \tau)^{\alpha-1} 
\textbf{G}(t, \textbf{X}(\tau)) d\tau \right| \\
& \leq \sup_{t \in [0,T]} \left| \Bar{\textbf{X}}(t) 
- \Bar{\textbf{X}}_0 - \frac{1}{\Gamma(\alpha)} 
\int_0^t (t - \tau)^{\alpha-1} \textbf{G}(t, \Bar{\textbf{X}}(\tau)) d\tau \right| \\
& \quad + \sup_{t \in [0,T]}  \frac{1}{\Gamma(\alpha)} 
\int_0^t (t - \tau)^{\alpha-1} \left| \textbf{G}(t, \Bar{\textbf{X}}(\tau)) 
- \textbf{G}(t, \textbf{X}(\tau)) \right| d\tau  \\
& \leq \frac{\epsilon T^\alpha}{\Gamma(\alpha+1)} 
+ \frac{\Sigma}{\Gamma(\alpha)} \sup_{t \in [0,T]} 
\int_0^t (t - \tau)^{\alpha-1} |\Bar{\textbf{X}}(\tau) 
- {\textbf{X}}(\tau)| d\tau \\
& \leq \max_{i= S,E,I, P,A,H,R,F} \left( \frac{\bar{\epsilon} 
T^{\alpha_i}}{\Gamma(\alpha_i+1)} + \frac{\Sigma T^{\alpha_i}}{
\Gamma(\alpha_i+1)} \|\Bar{\textbf{X}}(\tau) 
- {\textbf{X}}(\tau)\|_{\mathfrak{G}} \right),
\end{split}
\end{equation}
where, $\displaystyle{\bar{\epsilon}=\max_i \epsilon_i}$. 
From this, we derive that $\displaystyle{\|\Bar{\textbf{X}} 
- \textbf{X}\|_{\mathfrak{G}} \leq \Sigma_G \bar{\epsilon}}$, 
with $\Sigma_G$ defined as 
\[
\Sigma_G=\displaystyle{\max_{i= S,E,I, P,A,H,R,F}
\frac{T^{\alpha_i}}{\Gamma(\alpha_i+1)-T^{\alpha_i} \Sigma}}.
\]
The proof is complete.
\end{proof}


\section{Numerical results}
\label{sec:results}

This section presents the efficiency of our proposed model in simulating 
the dynamics of COVID-19 transmission, which integrates the variables of 
asymptomatic and super-spreader individuals. The model is calibrated using 
real-world data of Portugal obtained from the Center for Systems Science 
and Engineering (CSSE) at Johns Hopkins University \cite{DataCSSE}. 
The initial conditions for the model are described as follows:
\begin{align}
\begin{split}
&
N_0=10280000/1363,\\
& 
S_0=N_0 - 5,\;E_0=0,\;I_0=4,\;P_0=1,\;A_0=0,\;H_0=0,\;R_0=0,\;F_0=0,
\end{split}
\end{align}
while Table~\ref{tab:parCovid} provides the values 
of the model parameters used in the analysis. 

We evaluate the performance of the proposed model \eqref{model} 
in the context of the absence of super-spreaders or asymptomatic individuals. 
Our model incorporates additional coefficients to account for the contribution 
of asymptomatic and super-spreader individuals to the transmission of the disease. 
Thus, we consider six variations of our proposed model \eqref{model}, each with 
different constraints and orders of derivatives, to compare their performance 
in modelling. The first three models are named M1, M2, and M3, and they 
utilize integer-order derivatives.

Model M1 lacks the influence of super-spreader individuals $P$ and their 
related coefficients, $\beta'$, $\rho_2$, and $\delta_p$. Model M2 excludes 
asymptomatic individuals and their coefficients, meaning compartment $A$ 
and its related coefficients $\beta''$ and $\delta_a$ are set to zero. 
In this case, instead of using $\rho_2$, we simply use $1-\rho_1$, 
thus eliminating the need for the parameter $\rho_2$. Model M3 
represents our proposed model with integer-order derivatives.

\begin{table}[ht!]
\caption{Optimized values of parameters for models M1, M2, and M3, 
with integer orders and the order derivatives, denoted as 
\(\alpha_{S, E, I, P, A, H, R, F}\) for models FM1, FM2, and FM3, 
obtained by fitting COVID-19 data from Portugal and evaluated based 
on their root mean square deviation (RMSD) errors.}
\begin{tabular}{ccccccc} \hline
Parameters & M1 & FM1 & M2 & FM2 & M3 & FM3\\ \hline
$\beta$ & 3.3687 & 3.4956 & 3.1300 & 3.5122 &  3.1221 & 2.9763 \\
$\beta'$ & -  &  - & 7.9889 & 7.7403 & 5.0792 & 7.8959\\
$\beta''$ & 4.9091 & 5.0111 & - & - & 4.6831 & 2.7722 \\
$\rho_1$ & 0.7010 &  0.6002 & 0.8760 &  0.8311 & 0.5990 & 0.5273 \\
$\rho_2$ & -  &  - & - & - & 0.1500 & 0.2509 \\
$\delta_p$ & -  &  - & 0.0197 & 0.0400 & 0.0010 & 0.0017 \\
$\delta_a$ & 0.0055  &  0.0061 & - & - & 0.0097 & 0.0204 \\
$\alpha_S$ & -  & 0.9035  & -& 1.0000 & -& 0.9786 \\
$\alpha_E$ & -  & 1.0000  & -& 0.9776 & -& 1.0000\\
$\alpha_I$ & -  &  1.0000 & -& 0.8932 & -& 1.0000\\
$\alpha_P$ & -  &  0.8500 & -& 1.0000 & -& 1.0000\\
$\alpha_A$ & -  &  1.0000 & -& 0.8500 & -& 1.0000\\
$\alpha_H$ & -  &  1.0000 & -& 1.0000 & -& 1.0000\\
$\alpha_F$ & -  &  0.8998 & -& 1.0000 & -& 0.9399 \\ \hline 
RMSD &  123.3005 & 107.4498 & 116.4057  & 109.8153 & 108.2076  & 99.3987 \\ \hline
\end{tabular}
\label{tab:FittedPara}
\end{table}

Furthermore, we compare the fractional-order forms of the models, denoted as 
FM1, FM2, and FM3, respectively. The following box summarizes all the models together.

\begin{tcolorbox}[colback=white!10!white, title=Summary of the Models]
\textbf{Integer-Order Models}
\begin{itemize}
\item[\textbf{M1}]: Model \eqref{model} lacking super-spreader 
individuals ($P=0$, $\beta'$, $\rho_2$, $\delta_p = 0$)
\item[\textbf{M2}]: Model \eqref{model} lacking asymptomatic 
individuals ($A=0$, $\beta''$, $\delta_a = 0$, $\rho_2 = 1-\rho_1$)
\item[\textbf{M3}]: Model \eqref{model} with super-spreader 
and asymptomatic individuals
\end{itemize}
\vspace{1em}
\textbf{Fractional-Order Models}
\begin{itemize}
\item[\textbf{FM1}]: Fractional-order version of M1
\item[\textbf{FM2}]: Fractional-order version of M2
\item[\textbf{FM3}]: Fractional-order version of M3
\end{itemize}
\end{tcolorbox}
We obtain the values of parameters, including $\beta$, $\beta'$, 
$\beta''$, $\rho_1$, $\rho_2$, $\delta_p$, and $\delta_a$, 
by fitting the models to the data. In the case of fractional models 
(FM1, FM2, and FM3), we optimize the values of order derivatives 
and parameters to achieve the best possible fit. The resulting fitted 
values and their root mean square deviation (RMSD) for all models can be 
found in Table~\ref{tab:FittedPara}, such that RMSD
$(x,\widehat{x})=\sqrt{\frac{1}{n} \sum_{t=1}^n(x_t-\widehat{x}_t)^2}$, 
where, $n$ is the number of data points, $x$ approximated values, 
and $\widehat{x}$ real values. 

\begin{figure}[ht!]
\centering
\includegraphics[width=\textwidth]{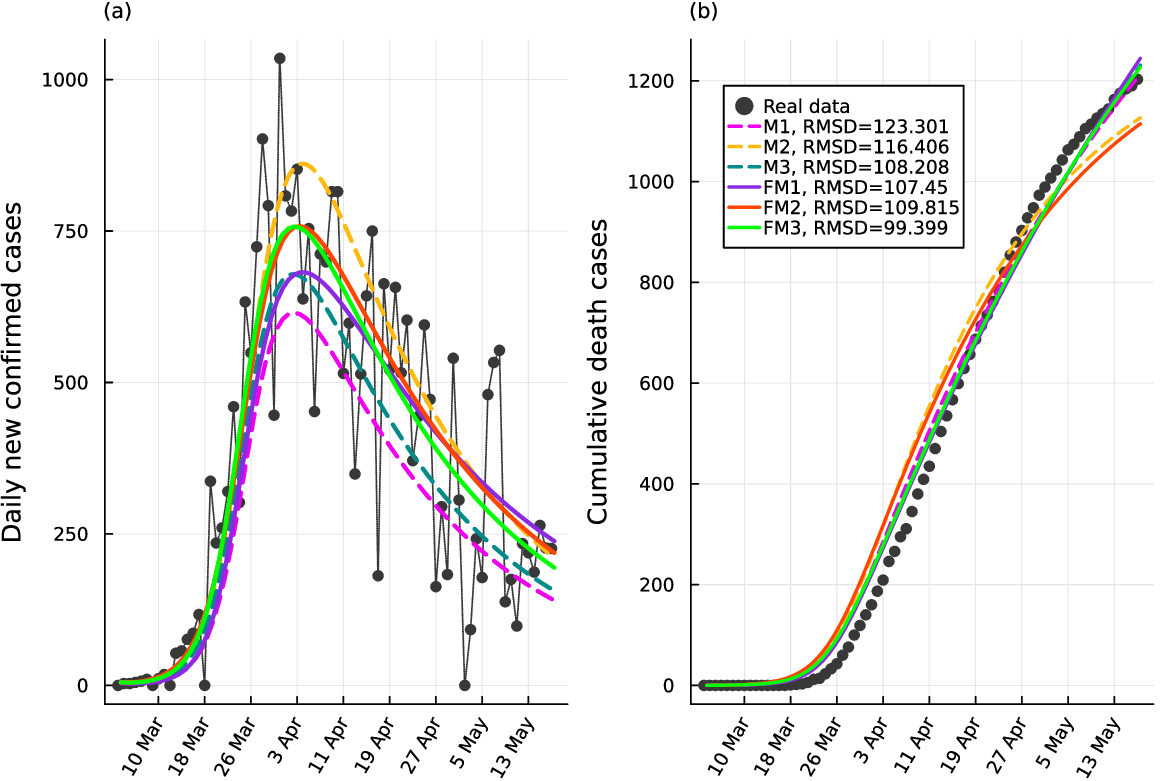}
\caption{Illustration of the accuracy of the studied models in fitting 
(a) daily new confirmed cases and 
(b) cumulative death cases, as retrieved from CSSE~\cite{DataCSSE}. 
The estimation of variables $I+P+H$ and $F$ is shown. Data points are 
represented by black circles. Errors are evaluated using the root mean square deviation (RMSD).
Model M1, which lacks super-spreaders and uses integer orders, is shown to have 
an inferior performance with an error of 123.30. The fractional version of this model, 
FM1 (solid purple line), improves accuracy significantly, reducing the error to 107.45.
Model M2, which lacks asymptomatic individuals (dashed orange line), performs better 
than M1 with an error of 116.41. However, its fractional version, FM2 (solid red line), 
shows a smaller improvement compared to FM1, with an error of 109.81.
Our proposed model \eqref{model} with integer orders, M3 (dashed green line), 
which includes both super-spreader and asymptomatic compartments, performs better 
than both M1 and M2, with an error of 108.21. The fractional version of this model, 
FM3 (solid green line), shows the best performance, reducing the error to 99.40.}
\label{fig:fitting}
\end{figure}

\begin{figure}[ht!]
\centering
\includegraphics[width=\textwidth]{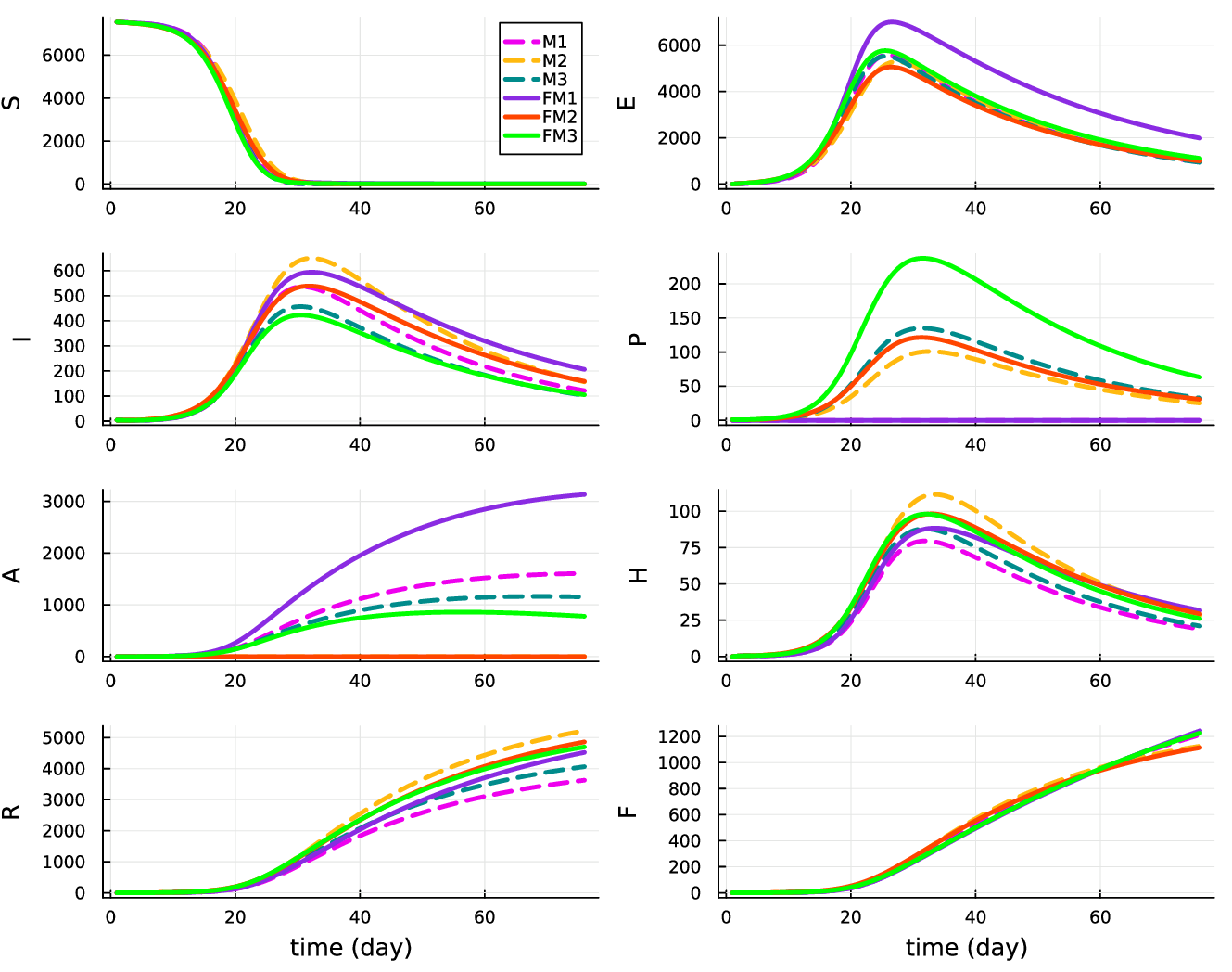}
\caption{Comparison of individual dynamics from simulations of six studied models: 
Model M1 excludes super-spreaders (with parameters $\beta'$, $\rho_2$, $\delta_p = 0$), 
and uses integer-order derivatives. Model M2 excludes asymptomatic individuals (with parameters 
$\beta''$, $\delta_a = 0$, $\rho_2 = 1-\rho_1$), and also uses integer-order derivatives. 
Model M3 is the proposed model \eqref{model} that includes both super-spreader 
and asymptomatic compartments, using integer-order derivatives. The fractional-order 
models are as follows: FM1 is the fractional-order version of M1, FM2 is the 
fractional-order version of M2, and FM3 is the fractional-order version of M3. 
The values for parameters and orders of the derivatives used in simulations 
are provided in Tables~\ref{tab:parCovid} and \ref{tab:FittedPara}.}
\label{fig:allplots}
\end{figure}

Figure~\ref{fig:fitting} illustrates the models' effectiveness in fitting daily 
new confirmed cases and cumulative death cases. The comparison shows that Model 
M1, with an error of 123.30, performs worse than the other models. However, 
the fractional form of this model, FM1, improves accuracy, reducing the error to 107.45.

Model M2 shows better performance with an error of 116.41. The fractional form, FM2, 
further improves the fit, achieving an error of 109.81. This improvement 
is less significant than that seen in FM1.

Our proposed model \eqref{model} with integer-order derivatives, M3, 
which includes both asymptomatic and super-spreader compartments, outperforms 
M1 and M2, with an error of 108.21. The fractional-order version, FM3, 
further enhances accuracy, reducing the error to 99.40, making it 
the best-performing model among all.

Figure~\ref{fig:allplots} presents a comprehensive comparison of the individual 
dynamics obtained from simulations of the six studied models. The simulations 
were conducted using the fitted orders of derivatives and parameters, as outlined 
in Tables~\ref{tab:parCovid} and \ref{tab:FittedPara}. The figure visually 
illustrates the distinct behaviours of the models for different individual cases, 
highlighting the impact of the chosen order derivatives 
and parameters on the dynamics of the disease spread.


\subsection{Sensitivity Analysis of the Basic Reproduction Number}

This section aims to elucidate the impact of varying variables on the propagation 
of infectious illnesses. By evaluating the sensitivity of $\mathcal{R}_0$ to different parameters, 
such as social distancing measures, policy-makers can make informed decisions about public 
health interventions. Additionally, this approach enables the identification of parameters 
with the greatest impact on disease spread. To determine the effects of minor adjustments 
in parameter $\mathcal{P}$ on the basic reproduction number $\mathcal{R}_0$, we introduce 
the forward normalized sensitivity index of $\mathcal{R}_0$ for $\mathcal{P}$, 
which can be expressed as follows:
\begin{equation*}
\mathcal{S}_{\mathcal{P}}^{\mathcal{R}_0}
=\frac{\partial \mathcal{R}_0}{\partial \mathcal{P}}\frac{\mathcal{P}}{\mathcal{R}_0}.
\end{equation*}

We have computed the sensitivity indices for $\mathcal{R}_0$, 
associated with the estimated parameters for models, 
and their corresponding values are presented 
in Table~\ref{tab:Sensitivity}.

\begin{table}[th!]
\caption{Sensitivity index of $\mathcal{R}_0$ across all parameters 
($\mathcal{S}_{\mathcal{P}}^{\mathcal{R}_0}$) and models.}
\begin{tabular}{ccccccc}
\hline
Parameters & \texttt{M1} & \texttt{FM1} & \texttt{M2} 
& \texttt{FM2} & \texttt{M3} & \texttt{FM3} \\ \hline
$\beta$       & 0.0435 & 0.0324 & 0.7601  & 0.7380  & 0.0766 & 0.1843  \\
$\beta'$      &  -       & -       & 0.2399  & 0.2620  & 0.0290 & 0.2055  \\
$\beta''$     & 0.9565  & 0.9676  & -       & -       & 0.8944   & 0.6102   \\
$l$       & 0.0084 & 0.0063 & 0.1651  & 0.1656  & 0.0182 & 0.0506 \\
$\rho_1$     & -2.1883  & -1.4190  & -1.1173  & -0.7190 & -2.0742 & -1.2845    \\
$\rho_2$     & -  &     - &    - &     -     &  -0.5034    & -0.4660 \\
$\gamma_a$   &-0.0161   &-0.0120  &  -0.4127   &-0.3990    & -0.0450 & -0.1937\\
$\gamma_i$   &-0.0098 & -0.0073  &-0.2326    &  -0.2272   &  -0.0255   & -0.0986\\
$\gamma_r$   & -0.0083 & -0.0062 & -0.1627   &  -0.1632  &   -0.0180  &  -0.0499\\
$\delta_i$   &  -0.0078 & -0.0058 & -0.1317     &-0.1267  &   -0.0129 &   -0.0296\\
$\delta_p$     & - &         - &       -0.0231 &   -0.0477  &  -0.0002  &-0.0019\\
$\delta_h$    &-3.14e-5  &-2.34e-5  & -0.0006  &-0.0006 & -6.81e-5   &-0.0002\\
$\delta_a$   & -0.3415 &   -0.3674  &   -       &    -      &    -0.4403  &   -0.4096\\ \hline
\end{tabular}
\label{tab:Sensitivity}
\end{table}

It is clear that if $\mathcal{R}_0$ increases concerning $\mathcal{P}$, 
then the sensitivity index of $\mathcal{S}_{\mathcal{P}}^{\mathcal{R}_0}$ 
is positive; and if $\mathcal{R}_0$ decreases concerning $\mathcal{P}$, 
then the sensitivity index is negative. To elaborate further, 
a sensitivity index of $\mathcal{S}_{\beta''}^{\mathcal{R}_0}=0.9565$ implies 
that an increase of 1\% in $\beta''$, keeping all other parameters constant, 
will lead to a 0.9565\% increase in $\mathcal{R}_0$. Similarly, 
$\mathcal{S}_{\rho_1}^{\mathcal{R}_0}=-2.1883$ means that increasing 
the parameter $\rho_1$ by 1\%, while holding all other parameters constant, 
will cause a decrease in the value of $\mathcal{R}_0$ by 2.1883\%. 

According to Table~\ref{tab:Sensitivity}, the most influential parameter 
for models M1, FM1, M3, and FM3, contributing to the increase in the value 
of $\mathcal{R}_0$, is $\beta''$, whereas it is $\beta$ in models M2 and FM2.
The most influential parameter contributing to its reduction is $\rho_1$ for all models. 

\subsection{Numerical methods and implementation}

We conducted all the numerical analyses using the programming language \texttt{Julia} 
and the high-performance computing system \texttt{PUHTI} at the Finnish IT Center for Science (CSC). 
To solve fractional differential equations, we made use of the \texttt{FdeSolver.jl} 
package (v 1.0.7) that applies predictor-corrector algorithms and product-integration 
rules~\cite{FdeSolver}. Parameter estimation was accomplished through Bayesian inference 
and Hamiltonian Monte Carlo (HMC) using \texttt{Turing.jl}, while ODEs were solved 
with \texttt{DifferentialEquations.jl}. The order of derivatives was optimized 
through the function (L)BFGS, based on the (Limited-memory) 
Broyden--Fletcher--Goldfarb--Shanno algorithm 
from the \texttt{Optim.jl} package and \texttt{FdeSolver.jl}.


\section{Conclusion}
\label{sec:conc}

Our study presents an innovative approach to modelling COVID-19 transmission 
dynamics by integrating asymptomatic and super-spreader individuals into a single 
model using fractional calculus. Furthermore, we have conducted a qualitative 
analysis of our proposed model, which includes determining the basic reproduction 
number and analysing the disease-free equilibrium. Our findings emphasize the 
benefits of incommensurate fractional order derivatives, such as increased 
flexibility in capturing disease dynamics and refined memory effects 
in the transmission process. By fitting the proposed model with real data 
from Portugal and comparing it with existing models, we demonstrate that 
including supplementary coefficients and fractional derivatives enhances 
the model's goodness of fit. Sensitivity analysis further provides valuable 
insights for policy-makers in designing effective strategies to mitigate 
the spread of COVID-19.  Overall, our study contributes to the literature 
on fractional modelling of COVID-19 transmission 
and has potential implications for understanding 
and controlling the spread of infectious diseases.


\section*{Statements and Declarations}

\subsection*{Competing Interests}

The authors declare no conflicts of interest.


\subsection*{Data and Code Availability}

All computational results for this paper are available on GitHub, 
and accessible via the permanent Zenodo DOI: \url{https://doi.org/10.5281/zenodo.14607808}.


\subsection*{Funding}

This study has been supported by the Academy of Finland (330887 to MK, LL) 
and the UTUGS graduate school of the University of Turku (to MK). 
FN is supported by the Bulgarian Ministry of Education and Science, 
Scientific Programme ``Enhancing the Research Capacity in Mathematical Sciences (PIKOM)'', 
Contract No. DO1--67/05.05.2022. 
DFMT is supported by FCT (Funda\c{c}\~{a}o para a Ci\^{e}ncia e a Tecnologia)
through CIDMA projects UIDB/04106/2020 (\url{https://doi.org/10.54499/UIDB/04106/2020})
and UIDP/04106/2020 (\url{https://doi.org/10.54499/UIDP/04106/2020}),
and the CoSysM3 project 2022.03091.PTDC (\url{https://doi.org/10.54499/2022.03091.PTDC}). 


\subsection*{Acknowledgements}

The authors wish to acknowledge CSC-IT Center for Science, Finland, 
for computational resources and high-speed networking.



\end{document}